\newcommand{\figureline}{\rule{\textwidth}{0.5pt}}
\newcommand\etc{etc\@ifnextchar.{}{.\@}\xspace}
\newcommand\ie{i.e.\@\xspace}  
\newcommand{\vstrut}[2][0pt]{\rule[#1]{0pt}{#2}}
\newcommand{\inlinegraphic}[2]{
  \dimendef\grafheight=255\dimendef\grafvshift=254
  \grafheight=#1
  \grafvshift=-0.5\grafheight
  \advance\grafvshift by 0.5ex
  \raisebox{\grafvshift}{\includegraphics[height=\grafheight]{images/#2}\xspace}
}
\newcommand{\ninlinegraphic}[2][1.0]{
  \dimendef\grafheight=255\dimendef\grafvshift=254
  \setbox0 = \hbox{\scalebox{#1}{\includegraphics{images/#2}}}
  \grafheight=\the\ht0
  \grafvshift=-0.5\grafheight
  \advance\grafvshift by 0.5ex
  \raisebox{\grafvshift}{\includegraphics[height=\grafheight]{images/#2}\xspace}
}
\newcommand{\inline}[1]{
  \raisebox{0.5ex}{\;#1\;}
}
\newtheorem{theorem}{Theorem}[section]
\newtheorem{proposition}[theorem]{Proposition}
\newtheorem{lemma}[theorem]{Lemma}
\theoremstyle{definition}
\theoremstyle{definition}
\theoremstyle{definition}\newtheorem{definition}[theorem]{Definition}
\theoremstyle{definition}
\theoremstyle{definition}\newtheorem{remark}[theorem]{Remark}
\theoremstyle{definition}
\newcommand{\denote}[1]{
\llbracket #1 \rrbracket} 
\newcommand{\ldenote}[1]{\left\llbracket #1 \right\rrbracket}
\newcommand{\sizeof}[1]{
  \left|#1\right|}
\newcommand{\ket}[1]{
    \ensuremath{\left|  #1 \right\rangle}\xspace}
\newcommand{\CZ}{\ensuremath{\wedge Z}\xspace}
\newcommand{\id}[1]{\ensuremath{\mathrm{id}_{#1}}}
\newcommand{\id}[1]{\ensuremath{1_{#1}}}
\newcommand{\fdhilb}{
\ensuremath{\mathbf{fdHilb}}\xspace}
\newcommand{\zxcalculus}{\textsc{zx}-calculus\xspace}
\newcommand{\zxdiagram}{\textsc{zx}-diagram\xspace}
\newcommand{\zxdiagrams}{\textsc{zx}-diagrams\xspace}
\newcommand{\ZX}{\ensuremath{\mathbf{ZX}}\xspace}
\newcommand{\ZXs}{\ensuremath{\mathbf{ZX_s}}\xspace}
\newcommand{\Cliff}{\ensuremath{\mathbf{Cliff}}\xspace}
\newcommand{\C}[1]{%
\ensuremath{\mathcal{C}_{#1}\xspace}}
\newcommand{\CC}[1]{%
\ensuremath{\mathfrak{C}_{#1}\xspace}}
\renewcommand{\P}[1]{%
\ensuremath{\mathcal{P}_{#1}\xspace}}
\newcommand{\eq}{\ensuremath{\stackrel{*}{\leftrightarrow}}}
\newcommand{\rw}{\ensuremath{\stackrel{*}{\rightarrow}}}
\newcommand{\CNOT}{\ensuremath{\textsc{cnot}}\xspace}
\newcommand{\TONC}{\ensuremath{\textsc{tonc}}\xspace}
\newcommand{\inlinetikzfig}[1]{%
  \raisebox{0.5ex}{{\;\tikzfig{#1}}}}
\tikzstyle{halfsize}=[x=0.5cm, y=0.5cm]
\tikzstyle{normalsize}=[]
\tikzstyle{doublesize}=[]
\tikzstyle{(null)}=[]
\tikzstyle{plain}=[]
\newcommand{\greenspider}[1]{%
\begin{tikzpicture}[quanto]
  \spider{green vertex}{spideri}{0,0}
  \node [green angle] at (spideri) {#1};
\end{tikzpicture}
}
\newcommand{\redspider}[1]{%
\begin{tikzpicture}[quanto]
  \spider{red vertex}{spideri}{0,0}
  \node [red angle] at (spideri) {#1};
\end{tikzpicture}
}
\newcommand{\hgate}{%
\begin{tikzpicture}[quanto]
  \node [boundary vertex] (a) at (-1.0,1.05) {};
  \node [hadamard vertex] (c) at (-2.25,1.05) {};
  \node [boundary vertex] (d) at (-3.525,1.05) {};
  \draw [] (a) to (c);
  \draw [] (c) to (d);
\end{tikzpicture}
}
\newcommand{\greendelta}{\inline{%
\begin{tikzpicture}[quanto]
  \node [style=boundary vertex] (0) at (0, 1) {};
  \node [style=green vertex] (1) at (0, 0) {};
  \node [style=boundary vertex] (2) at (-1, -1) {};
  \node [style=boundary vertex] (3) at (1, -1) {};
  \draw  (1) to (2);
  \draw  (1) to (3);
  \draw  (0) to (1);
\end{tikzpicture}
}}
\newcommand{\greenmu}{\inline{%
\begin{tikzpicture}[quanto]
  \node [style=boundary vertex] (0) at (0, -1) {};
  \node [style=green vertex] (1) at (0, 0) {};
  \node [style=boundary vertex] (2) at (-1, 1) {};
  \node [style=boundary vertex] (3) at (1, 1) {};
  \draw  (1) to (2);
  \draw  (1) to (3);
  \draw  (0) to (1);
\end{tikzpicture}
}}
\newcommand{\greenunit}{%
  \begin{tikzpicture}[quanto]
    \node [boundary vertex] (a) at (-2.0,1.0) {};
    \node [green vertex] (b) at (-1.0,1.0) {};
    \draw [] (b) to (a);
  \end{tikzpicture}
}
\newcommand{\greencounit}{%
  \begin{tikzpicture}[quanto]
    \node [boundary vertex] (a) at (2.0,1.0) {};
    \node [green vertex] (b) at (1.0,1.0) {};
    \draw [] (b) to (a);
  \end{tikzpicture}
}
\newcommand{\redunit}{%
  \begin{tikzpicture}[quanto]
    \node [boundary vertex] (a) at (-2.0,1.0) {};
    \node [red vertex] (b) at (-1.0,1.0) {};
    \draw [] (b) to (a);
  \end{tikzpicture}
}
\newcommand{\greenphase}[1]{%
  \begin{tikzpicture}[quanto]
    \node [green vertex] (c) at (-2.0,1.175) {};
    \node [boundary vertex] (b) at (-3.0,1.175) {};
    \node [boundary vertex] (a) at (-1.0,1.175) {};
    \draw [] (a) to (c);
    \draw [] (c) to (b);
    \node [green angle] at (c) {$#1$};
  \end{tikzpicture}
}
\newcommand{\redphase}[1]{%
  \begin{tikzpicture}[quanto]
    \node [red vertex] (c) at (-2.0,1.175) {};
    \node [boundary vertex] (b) at (-3.0,1.175) {};
    \node [boundary vertex] (a) at (-1.0,1.175) {};
    \draw [] (a) to (c);
    \draw [] (c) to (b);
    \node [red angle] at (c) {$#1$};
  \end{tikzpicture}
}
\newcommand{\czed}{%
  \begin{tikzpicture}[quanto]
    \node [boundary vertex] (a) at (-1.0,1.0) {};
    \node [boundary vertex] (d) at (-3.0,3.0) {};
    \node [boundary vertex] (c) at (-3.0,1.0) {};
    \node [boundary vertex] (b) at (-1,3.0) {};
    \node [hadamard vertex] (g) at (-2.0,2.0) {};
    \node [green vertex] (f) at (-2.0,1.0) {};
    \node [green vertex] (e) at (-2.0,3.0) {};
    \draw [] (f) to (g);
    \draw [] (e) to (d);
    \draw [] (f) to (c);
    \draw [] (b) to (e);
    \draw [] (a) to (f);
    \draw [] (g) to (e);
  \end{tikzpicture}
}
\newcommand{\cex}{%
  \begin{tikzpicture}[quanto]
    \node [boundary vertex] (a) at (-1.0,1.0) {};
    \node [boundary vertex] (d) at (-3.0,3.0) {};
    \node [boundary vertex] (c) at (-3.0,1.0) {};
    \node [boundary vertex] (b) at (-1,3.0) {};
    \node [green vertex] (f) at (-2.0,1.0) {};
    \node [red vertex] (e) at (-2.0,3.0) {};
    \draw [] (f) to (e);
    \draw [] (e) to (d);
    \draw [] (f) to (c);
    \draw [] (b) to (e);
    \draw [] (a) to (f);
  \end{tikzpicture}
}
\newcommand{\tonc}{%
  \begin{tikzpicture}[quanto]
    \node [boundary vertex] (a) at (-1.0,1.0) {};
    \node [boundary vertex] (d) at (-3.0,3.0) {};
    \node [boundary vertex] (c) at (-3.0,1.0) {};
    \node [boundary vertex] (b) at (-1,3.0) {};
    \node [red vertex] (f) at (-2.0,1.0) {};
    \node [green vertex] (e) at (-2.0,3.0) {};
    \draw [] (f) to (e);
    \draw [] (e) to (d);
    \draw [] (f) to (c);
    \draw [] (b) to (e);
    \draw [] (a) to (f);
  \end{tikzpicture}
}
\newcommand{\Heulerdecomp}{\inline{%
\begin{tikzpicture}[quanto]
  \node [boundary vertex] (0) at (0, 1.5) {};
  \node [green vertex] (1) at (0, 1) {};
  \node [red vertex] (2) at (0, 0) {};
  \node [green vertex] (3) at (0, -1) {};
  \node [boundary vertex] (4) at (0, -1.5) {};
  \draw  (0) to (1);
  \draw  (1) to (2);
  \draw  (2) to (3);
  \draw  (4) to (3);
  \node [green angle] at (1) {$\pi/2$};
  \node [red angle] at (2) {$\pi/2$};
  \node [green angle] at (3) {$\pi/2$};
\end{tikzpicture}
}}
\newcommand{\rpp}{%
  \begin{tikzpicture}[quanto]
    \node [rpp] (c) at (-2.0,1.175) {};
    \node [boundary vertex] (b) at (-3.0,1.175) {};
    \node [boundary vertex] (a) at (-1.0,1.175) {};
    \draw [] (a) to (c);
    \draw [] (c) to (b);
  \end{tikzpicture}
}
\newcommand{\rpi}{%
  \begin{tikzpicture}[quanto]
    \node [rpi] (c) at (-2.0,1.175) {};
    \node [boundary vertex] (b) at (-3.0,1.175) {};
    \node [boundary vertex] (a) at (-1.0,1.175) {};
    \draw [] (a) to (c);
    \draw [] (c) to (b);
  \end{tikzpicture}
}
\newcommand{\rmm}{%
  \begin{tikzpicture}[quanto]
    \node [rmm] (c) at (-2.0,1.175) {};
    \node [boundary vertex] (b) at (-3.0,1.175) {};
    \node [boundary vertex] (a) at (-1.0,1.175) {};
    \draw [] (a) to (c);
    \draw [] (c) to (b);
  \end{tikzpicture}
}
\newcommand{\gpp}{%
  \begin{tikzpicture}[quanto]
    \node [gpp] (c) at (-2.0,1.175) {};
    \node [boundary vertex] (b) at (-3.0,1.175) {};
    \node [boundary vertex] (a) at (-1.0,1.175) {};
    \draw [] (a) to (c);
    \draw [] (c) to (b);
  \end{tikzpicture}
}
\newcommand{\gpi}{%
  \begin{tikzpicture}[quanto]
    \node [gpi] (c) at (-2.0,1.175) {};
    \node [boundary vertex] (b) at (-3.0,1.175) {};
    \node [boundary vertex] (a) at (-1.0,1.175) {};
    \draw [] (a) to (c);
    \draw [] (c) to (b);
  \end{tikzpicture}
}
\newcommand{\gmm}{%
  \begin{tikzpicture}[quanto]
    \node [gmm] (c) at (-2.0,1.175) {};
    \node [boundary vertex] (b) at (-3.0,1.175) {};
    \node [boundary vertex] (a) at (-1.0,1.175) {};
    \draw [] (a) to (c);
    \draw [] (c) to (b);
  \end{tikzpicture}
}
\newcommand{\SWAP}{%
  \begin{tikzpicture}[quanto,halfsize]
    \node [boundary vertex] (tl) at (1.0,1.0) {};
    \node [boundary vertex] (br) at (-2,3.0) {};
    \node [boundary vertex] (bl) at (-2,1.0) {};
    \node [boundary vertex] (tr) at (1.0,3.0) {};
    \draw [out=0,in=180] (br) to (tl) ;
    \draw [out=0,in=180] (bl) to (tr) ;
  \end{tikzpicture}
}
\newcommand{\circbox}[1]{\inline{%
\begin{tikzpicture}[circuit]
  \node  (0) at (1,0) {};
  \node [box] (1) at (0, 0) {$#1$};
  \node  (2) at (-1,0) {};
  \draw  (0) to (1);
  \draw  (1) to (2);
\end{tikzpicture}
}}
\newcommand{\circX}[1]{\circbox{X_#1}}
\newcommand{\circZ}[1]{\circbox{Z_#1}}
\newcommand{\circH}{\circbox{H_{}}}
\newcommand{\circS}{\circbox{S_{}}}
\newcommand{\circV}{\circbox{V_{}}}
\newcommand{\circCX}{\inline{%
\begin{tikzpicture}[circuit]
  \node  (0) at (1, 0) {};
  \node  (1) at (1, 1) {};
  \circcnot{3}{0,1}{2}{0,0}
  \node  (4) at (-1, 0) {};
  \node  (5) at (-1,1) {};
  \draw  (1) to (3);
  \draw  (0) to (2);
  \draw  (3) to (5);
  \draw  (2) to (4);
\end{tikzpicture}
}}
\newcommand\projectrepo{\url{https://gitlab.cis.strath.ac.uk/kwb13215/Clifford-Quanto/}\xspace}
\newcommand{\pos}[1]{\ensuremath{\mathrm{pos}(#1)}\xspace}
\begin{document}

\maketitle
\begin{abstract}
  We present a system of equations between Clifford circuits, all
  derivable in the \zxcalculus, and formalised as rewrite rules in the
  Quantomatic proof assistant. By combining these rules with some
  non-trivial simplification procedures defined in the Quantomatic
  tactic language, we demonstrate the use of Quantomatic as a circuit
  optimisation tool. We prove that the system always reduces Clifford
  circuits of one or two qubits to their minimal form, and give
  numerical results demonstrating its performance on larger Clifford
  circuits.
\end{abstract}

\section{Introduction}
\label{sec:introduction}

Remarkable advances in the past two
years have seen quantum computing hardware reach the point where the
deployment of quantum devices for non-trivial tasks is now a near-term
prospect.  However, these machines still suffer from severe
limitations, both in terms of memory size and the coherence time of
their qubits.  It is therefore of paramount importance to extract the
most useful work from the fewest operations: a poorly optimised
quantum program may not be able to finish before it is undone by
noise.

In this paper we study the automated 
optimisation of \emph{Clifford
  circuits}.  Clifford circuits are not universal for quantum
computation -- they are well known to be efficiently simulable by a
classical computer \cite{Scott-Aaronson:2004yf} -- however adding any
non-Clifford gate to the Cliffords yields a set of approximately
universal operations 
hence it is likely that the vast majority of operations in any quantum
program will be Clifford operations, and hence reducing the Clifford
depth and gate count of a circuit will have substantial benefit.

A secondary reason to focus on Cliffords is that the corresponding
subtheory of quantum mechanics is well-understood in terms of the
\zxcalculus \cite{Coecke:2009aa}.  Backens
\cite{1367-2630-16-9-093021} has shown that the \zxcalculus is sound
and complete for \emph{stabilizer quantum theory} -- that is the
fragment of quantum mechanics containing only the Clifford operations
and states which can be produced from them.  While recent extensions
to the \zxcalculus have been proposed which are complete for the
Clifford+T fragment \cite{Jeandel2017A-Complete-Axio} and for the full
qubit theory \cite{NgWang}, both these extensions are significantly
larger and more complex than the stabilizer subtheory, and both
axiomatisations are undergoing rapid development.  Clifford circuits
therefore provide a stable platform to develop techniques which could
later be extended to a universal language.

In this work we present a \emph{circuit optimiser} which can transform
a Clifford circuit into an equivalent circuit with reduced depth and
total number of operations.  The system always reduces Clifford
circuits of one or two qubits to their minimal form, and yields
significant size reductions for larger circuits.

The optimiser has been developed as a \emph{simplification procedure}
in the graphical proof-assistant Quantomatic
\cite{quantomatichome,Kissinger2015Quantomatic:-A-}.  By working
inside the proof-assistant we obtain an important benefit: alongside
the optimised circuit our optimiser produces a formal proof certifying
the correctness of its circuit transformations.

Quantomatic is a flexible graph rewrite engine which is especially
well-adapted to working with the \zxcalculus.  It is designed for
interactive use, and a typical session involves interleaving automated
tactics\footnote{A \emph{tactic} is a short program which automates a
  particular proof strategy.}, backtracking, and manually choosing
rewrites.  However, our circuit optimiser is designed to operate
without user intervention, and we have necessarily made extensive use
of the Quantomatic's powerful tactic combinators and its built-in Python
interpreter.  In doing so, we have constructed the largest and most
sophisticated Quantomatic development to date.

As a graph rewriting engine,  Quantomatic normally inspects only the
local subgraph structure of a term while searching for a rewrite to
apply.  However, quantum circuits have a global causal structure which
must be maintained to preserve the property of ``being a circuit''.  A key 
contribution of this work is the development of techniques to infer
the global structure and use it to guide the rewrite process.


Selinger \cite{selinger2013generators} presented generators and
relations for all the $n$-qubit Clifford groups as a rewrite system
over string diagrams.  This work has some similarities to that: most
notably, we also treat Clifford circuits as a free PROP subject to an
equivalence relation defined on small subgraphs.  However Selinger's
approach is based on transforming circuits to a standard form which is
often larger than the smallest equivalent circuit, whereas we aim for
minimal forms.  Selinger's rewrite system has a much larger number of
rules than the axioms of the \zxcalculus.  Maslov and Roetteler
\cite{Maslov2017Shorter-stabili} demonstrate that all Clifford
unitaries can be implemented in depth no more than 7 stages deep,
using the \CZ as the only two-qubit gate.  The \CZ has formal
disadvantages in \zxcalculus, so in our development we have used the
\CNOT and swap gates instead.  In the examples we have tested, our
optimiser usually attains similar depth over our gate set; however on
some examples it halts while some simplifications are still possible.
Quantomatic has previously been used to produce mechanised correctness
proofs for quantum error correcting codes
\cite{Garvie2017Verifying-the-S,Chancellor2016Coherent-Parity,Duncan:2013lr},
and many of the techniques used in these works are applied here.


We assume that the reader is at least somewhat familiar with both the
\zxcalculus and the Quantomatic system;  an accessible introduction to
both is found in the earlier formalisation effort
\cite{Garvie2017Verifying-the-S}.

\paragraph{Acknowledgements}
\label{sec:acknowledgements}

The authors wish to thank Aleks Kissinger and Hector Miller-Bakewell
for their technical assistance throughout this project.

\paragraph{The Quantomatic project files}
\label{sec:quant-proj-files}
All the proofs which appear in this paper and its appendix are
publicly available as a downloadable Quantomatic project at 
\projectrepo.


\section{The Clifford Group}
\label{sec:code}

Let $\omega = e^{i\pi/4}$.  The $n$th Clifford group is the group of
unitary matrices acting on ${\mathbb C}^{2^n}$, finitely generated by the
matrices
\[
S = 
\begin{pmatrix*}[r]
  1 &0 \\ 0&i
\end{pmatrix*}
\quad
V = \frac{1}{\sqrt 2}
\begin{pmatrix*}[r]
  \omega & \bar{\omega} \\ \bar{\omega} & \omega
\end{pmatrix*}
\quad
\CNOT = 
  \begin{pmatrix*}
    1&0&0&0 \\ 0&1&0&0 \\ 0&0&0&1 \\ 0&0&1&0
  \end{pmatrix*}
\quad
\sigma =
  \begin{pmatrix*}
    1&0&0&0 \\ 0&0&1&0 \\ 0&1&0&0  \\0&0&0&1 
  \end{pmatrix*}
\]
under tensor product and matrix composition.  Observe that $S$ and $V$
are both order 4, the Pauli matrices are given by $Z = S^2$ and $X =
V^2$, and the usual Hadamard matrix is obtained as $H =
\bar{\omega}SVS$.  Throughout this paper we will quotient the group by
global scalar factors, so that $A = zA$ for all non-zero $z \in
{\mathbb C}$.  The resulting quotient yields the $n$th
\emph{reduced} Clifford group, which we denote \C n.  

\C n is finite for all $n$ and the order of the group is given by the
expression\footnote{The only scalars which arise in the quotient are $\omega^k$
  for $k \in {\mathbb N}$ so the usual Clifford group is eight times larger.}:
\[
\sizeof{\C n} \ = \ \prod^n_{j=1} 2(4^j - 1) 4^j \ = \ 2^{2n + 1}(2^{2n} -1)
\sizeof{\C{n-1}}\,.
\]
The order of \C n grows very quickly: $\sizeof{\C1} = 24$,
$\sizeof{\C2} = 11520$, $\sizeof{\C3} = 92827280$, and so on.

The Clifford group \cite{Gottesman:1999to} is equivalently defined as
the \emph{normalizer} of the Pauli group within the unitaries.  Let \P
n be the subgroup of \C n generated by the $Z$ and $X$
operators\footnote{Under our quotient we have $Y = ZX = XZ$ so these
  generators suffice.}.  Then for all $C \in \C n$ and all $P \in \P
n$ we have $ CP = P'C $ for some $P' \in \P n$.
This crucial fact will heavily
exploited in our circuit optimisation procedure.

A more holistic view of the Cliffords is as a class of \emph{circuit
  diagrams} whose gates are the generators of the group, rather than
the group itself.  This can be formalised as a free \dag-PROP
\cite{Lane1965Categorical-Alg,Lack:2004sf} generated by morphisms
\[
\begin{array}{cccccccccc}
  \circS & \qquad & \circV & \qquad & \circCX  \\ 
  \vstrut{1.2em}
  S : 1 \to 1 && V : 1 \to 1 && \CNOT : 2 \to 2\\
  \\
  \circZ{{}} & \qquad  & \circX{{}} & \qquad & \circH \\ 
  \vstrut{1.2em}
  Z : 1 \to 1 && X : 1 \to 1 && H : 1 \to 1\\
\end{array}
\]
We refer to this PROP as \Cliff.  Note that the swap $\sigma$ is
automatically present due to the PROP structure.  
The morphisms $Z$, $X$, and $H$ are not essential, but it will be
convenient later to include them among the generators. The matrix
valuations above suffice to define a standard interpretation functor
$\denote{\cdot}_C : \Cliff \to \fdhilb$, such that
$\denote{\Cliff(n,n)}_C = \C n$.   
We will use this PROP only to  define the translation from \Cliff to the
\zxcalculus as a functor; we refer the interested reader to Selinger
\cite{selinger2013generators} for more details of this perspective. 

\begin{remark}\label{rem:syntacticPROP}
  Note that \Cliff, as a free PROP, does not include any non-trivial
  equations between Clifford circuits and is not therefore a
  presentation of the Clifford \emph{group} itself:  \Cliff distinguishes
  between different implementations of the same unitary map, which are
  identified in the image of $\denote{\cdot}_C$. In
  particular $\Cliff(n,n)$ is infinite for all $n > 0$.
\end{remark}

\begin{remark}\label{rem:fdhilb}
  Although it will play only a minor role in this paper, we stress
  that \fdhilb denotes the category of finite dimensional Hilbert
  spaces and linear maps under the quotient described above: $A = zA$
  for all non-zero $z \in \mathbb{C}$.  It is common to impose
  this quotient only for $z = e^{i\alpha}$, however in this work the
  question of normalisation will not arise.
\end{remark}

\section{The \zxcalculus}
\label{sec:zxcalculus}

The \zxcalculus \cite{Coecke:2009aa} is a formal graphical
notation for representing quantum states and processes, and an
equational theory for reasoning about them.  It is \emph{universal},
meaning that every linear map has a corresponding \zxcalculus term,
and \emph{sound}, meaning that any equation derivable in the calculus
is true in its standard Hilbert space interpretation.  

The \zxcalculus is formalised using \emph{open graphs}, a
generalisation of the usual notion of graph.  Open graphs have three
kinds of edges: \emph{closed} edges have a vertex at each end, as in a
conventional graph; \emph{half-open} edges are incident upon a single
vertex, the other end being open; and \emph{open} edges are open at
both ends, being incident upon no vertices.  The set of open ends of
the edges form the \emph{boundary} of the open graph.  An open graph
is called \emph{framed} if additionally the boundary is
partitioned into two linearly ordered sets of \emph{inputs} and
\emph{outputs}.

\begin{definition}
  \label{def:zxterms}
  A \emph{term}, or \emph{diagram}, of the \zxcalculus is a finite
  undirected framed open graph whose interior vertices are of
  the following types:
  \begin{itemize}
  \item $Z(\alpha)$ vertices, labelled by an angle $\alpha$ where $0
    \leq \alpha < 2\pi$.  These are depicted as green or light grey
    circles; if $\alpha = 0$  then the label is omitted.
  \item $X(\beta)$ vertices, labelled by an angle $\beta$, where $0
    \leq \beta < 2\pi$.  These are are depicted as red or dark grey
    circles; again, if $\beta = 0$ then the label is omitted.
  \item $H$ vertices; unlike the other types H vertices are constrained
    to have degree exactly 2.  They are depicted as yellow squares.  
  \end{itemize}
Two terms are considered equal if they are isomorphic as framed
labelled graphs.  

\end{definition}
The allowed vertex types are shown in Figure \ref{fig:vertices}.  We
adopt the convention that inputs are on the left, and outputs on the
right.

Compound terms may be formed by joining some number (maybe zero) of
the outputs of one term to the inputs of another.  Given a diagram $D
: n \to m$ we define its \emph{adjoint} $D^\dag : m \to n$ to be the
diagram obtained by reflecting the diagram around the vertical axis
and negating all the angles.  Thus the terms of the \zxcalculus
naturally form a a $\dag$-PROP
\cite{Coecke:2015aa,Duncan2016Interacting-Fro}, just like the
Clifford circuits of the previous section.  The three types of single
vertex shown in Figure~\ref{fig:vertices} can then be seen as the
generators of this PROP, which we call \textbf{ZX}.  

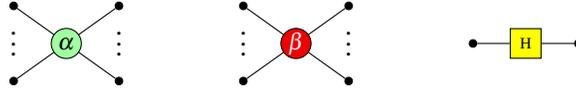
\begin{figure}
  \centering
  \[
\begin{tikzpicture}[baseline={([yshift=-.5ex]current bounding box.center)}]
	\begin{pgfonlayer}{nodelayer}
		\node [style=wire] (b5) at (-7.7, 2.5) {};
		\node [style=none] (b1) at (-6.3, 3.1) {$\vdots$};
		\node [style=wire] (b0) at (-6.3, 3.5) {};
		\node [style=wire] (b2) at (-6.3, 2.5) {};
		\node [style=Z] (v0) at (-7.0, 3.0) {$\alpha$};
		\node [style=none] (b4) at (-7.7, 3.1) {$\vdots$};
		\node [style=wire] (b3) at (-7.7, 3.5) {};
	\end{pgfonlayer}
	\begin{pgfonlayer}{edgelayer}
		\draw [style=simple] (b3) to (v0);
		\draw [style=simple] (v0) to (b2);
		\draw [style=simple] (v0) to (b0);
		\draw [style=simple] (b5) to (v0);
	\end{pgfonlayer}
\end{tikzpicture}
\qquad \qquad
\begin{tikzpicture}[baseline={([yshift=-.5ex]current bounding box.center)}]
	\begin{pgfonlayer}{nodelayer}
		\node [style=wire] (b5) at (-7.7, 2.5) {};
		\node [style=none] (b1) at (-6.3, 3.1) {$\vdots$};
		\node [style=wire] (b0) at (-6.3, 3.5) {};
		\node [style=wire] (b2) at (-6.3, 2.5) {};
		\node [style=X] (v0) at (-7.0, 3.0)  {$\beta$};
		\node [style=none] (b4) at (-7.7, 3.1) {$\vdots$};
		\node [style=wire] (b3) at (-7.7, 3.5) {};
	\end{pgfonlayer}
	\begin{pgfonlayer}{edgelayer}
		\draw [style=simple] (b3) to (v0);
		\draw [style=simple] (v0) to (b2);
		\draw [style=simple] (b0) to (v0);
		\draw [style=simple] (b5) to (v0);
	\end{pgfonlayer}
\end{tikzpicture}
\qquad \qquad
\begin{tikzpicture}[baseline={([yshift=-.5ex]current bounding box.center)}]
	\begin{pgfonlayer}{nodelayer}
		\node [style=wire] (b0) at (-12.7, 4.0) {};
		\node [style=wire] (b1) at (-11.3, 4.0) {};
		\node [style=hadamard] (v0) at (-12.0, 4.0) {};
	\end{pgfonlayer}
	\begin{pgfonlayer}{edgelayer}
		\draw [style=simple] (v0) to (b1);
		\draw [style=simple] (b0) to (v0);
	\end{pgfonlayer}
\end{tikzpicture} 
\]
  \caption{Allowed interior vertices}
  \label{fig:vertices}
  \figureline
\end{figure}

\begin{remark}
  \label{rem:scalars}
  We consider the \emph{scalar-free} fragment of the \zxcalculus; or
  equivalently, we quotient everything by non-zero scalar factors, just
  as we did in the previous section. This greatly reduces the
  complexity of the diagrams and poses no danger; Backens has
  demonstrated how to modify the calculus to preserve equality while
  respecting scalars \cite{Backens:2015aa}.
\end{remark}

\begin{definition}\label{def:zxsemantics}
  Given a \textsl{zx}-term $D:m\to n$, its \emph{standard
    interpretation} is a linear map $\denote{D} :
  (\mathbb{C}^2)^{\otimes m} \to (\mathbb{C}^2)^{\otimes n}$ defined
  on single vertices as follows:
\[
    \ldenote{\begin{tikzpicture}[baseline={([yshift=-.5ex]current bounding box.center)}]
	\begin{pgfonlayer}{nodelayer}
		\node [style=wire] (b5) at (-7.7, 2.5) {};
		\node [style=none] (b1) at (-6.3, 3.1) {$\vdots$};
		\node [style=wire] (b0) at (-6.3, 3.5) {};
		\node [style=wire] (b2) at (-6.3, 2.5) {};
		\node [style=Z] (v0) at (-7.0, 3.0) {$\alpha$};
		\node [style=none] (b4) at (-7.7, 3.1) {$\vdots$};
		\node [style=wire] (b3) at (-7.7, 3.5) {};
	\end{pgfonlayer}
	\begin{pgfonlayer}{edgelayer}
		\draw [style=simple] (b3) to (v0);
		\draw [style=simple] (v0) to (b2);
		\draw [style=simple] (v0) to (b0);
		\draw [style=simple] (b5) to (v0);
	\end{pgfonlayer}
\end{tikzpicture}}
     = 
    \left\{
    \begin{array}{ccl}
      \ket{0}^{\otimes m} & \mapsto &   \ket{0}^{\otimes n}\\
      \ket{1}^{\otimes m} & \mapsto &   e^{i \alpha}\ket{1}^{\otimes n}
    \end{array}\right.
\qquad\qquad
    \ldenote{\begin{tikzpicture}[baseline={([yshift=-.5ex]current bounding box.center)}]
	\begin{pgfonlayer}{nodelayer}
		\node [style=wire] (b5) at (-7.7, 2.5) {};
		\node [style=none] (b1) at (-6.3, 3.1) {$\vdots$};
		\node [style=wire] (b0) at (-6.3, 3.5) {};
		\node [style=wire] (b2) at (-6.3, 2.5) {};
		\node [style=X] (v0) at (-7.0, 3.0)  {$\beta$};
		\node [style=none] (b4) at (-7.7, 3.1) {$\vdots$};
		\node [style=wire] (b3) at (-7.7, 3.5) {};
	\end{pgfonlayer}
	\begin{pgfonlayer}{edgelayer}
		\draw [style=simple] (b3) to (v0);
		\draw [style=simple] (v0) to (b2);
		\draw [style=simple] (b0) to (v0);
		\draw [style=simple] (b5) to (v0);
	\end{pgfonlayer}
\end{tikzpicture}}
     = 
    \left\{
    \begin{array}{ccl}
      \ket{+}^{\otimes m} & \mapsto &   \ket{+}^{\otimes n}\\
      \ket{-}^{\otimes m} & \mapsto &   e^{i \beta}\ket{-}^{\otimes n}
    \end{array}\right.
\]
\vstrut{0.5\baselineskip}
\[
    \ldenote{\begin{tikzpicture}[baseline={([yshift=-.5ex]current bounding box.center)}]
	\begin{pgfonlayer}{nodelayer}
		\node [style=wire] (b0) at (-12.7, 4.0) {};
		\node [style=wire] (b1) at (-11.3, 4.0) {};
		\node [style=hadamard] (v0) at (-12.0, 4.0) {};
	\end{pgfonlayer}
	\begin{pgfonlayer}{edgelayer}
		\draw [style=simple] (v0) to (b1);
		\draw [style=simple] (b0) to (v0);
	\end{pgfonlayer}
\end{tikzpicture}}
    = 
    \frac{1}{\sqrt{2}}
    \left(
      \begin{array}{cc}
        1&1\\1&-1
      \end{array}
    \right).
\]
Since the above diagrams are the generators of \textbf{ZX}, the above
definition extends uniquely to a strict $\dag$-symmetric monoidal
functor $\denote{\cdot} : \mathbf{ZX} \to \fdhilb$.
\end{definition}


The \zxcalculus has a rich equational theory based on the theory of
Frobenius-Hopf algebras
\cite{Coecke:2009aa,Duncan2016Interacting-Fro}.  Various
axiomatisations have been proposed
(\cite{Duncan:2009ph,EPTCS171.5,Backens2016A-Simplified-St,Perdrix:2015aa,Jeandel2017A-Complete-Axio,NgWang})
with various advantages and drawbacks.  Here we adopt the scheme of
Backens \cite{1367-2630-16-9-093021} which is clean, concise, and
adequate for the treatment of the Clifford group.  These are shown in 
Figure \ref{fig:zxrules}.  Note that, due to the $H$-commute rule, the
colour swapped versions of all the rules are admissible, and we shall
use these colour swapped versions without further comment.

\begin{definition}\label{def:equiv}
  Let $\leftrightarrow$ be the one-step rewrite relation on the terms
  of \ZX generated by the pairs of terms shown in
  Figure~\ref{fig:zxrules};  let $\eq$ be the least equivalence
  relation containing $\leftrightarrow$.  We say that \ZX terms
  $a$ and $b$ are \emph{equivalent} when $a \eq b$.
\end{definition}

We reserve the notation $a = b$ for the case where $a$ and $b$ are
equal as graphs, in the sense of Definition~\ref{def:zxterms}.

\begin{figure}[t]

  \makebox[\textwidth][c]{
  \begin{minipage}{1.1\linewidth}
    \[
    \begin{array}{ccccc}
      \begin{tikzpicture}[baseline={([yshift=-.5ex]current bounding box.center)}, halfsize]
	\begin{pgfonlayer}{nodelayer}
		\node [style=wire] (0) at (1.75, -0.5) {};
		\node [style=wire] (1) at (1.75, -1) {};
		\node [style=wire] (2) at (-1.75, 1) {};
		\node [style=wire] (3) at (-1.75, -1) {};
		\node [style=wire] (4) at (-1.75, -0.5) {};
		\node [style=Z] (5) at (0.75, 0) {$\beta$};
		\node [style=Z] (7) at (-0.75, 0) {$\alpha$};
		\node [style=none] (9) at (1.75, 0.5) {$\vdots$};
		\node [style=wire] (10) at (1.75, 1) {};
		\node [style=none] (11) at (-1.75, 0.5) {$\vdots$};
	\end{pgfonlayer}
	\begin{pgfonlayer}{edgelayer}
		\draw (5) to (10);
		\draw (7) to (5);
		\draw (4) to (7);
		\draw (3) to (7);
		\draw (7) to (2);
		\draw (5) to (1);
		\draw (5) to (0);
	\end{pgfonlayer}
\end{tikzpicture}
      \;\;\leftrightarrow\;\;
      \begin{tikzpicture}[baseline={([yshift=-.5ex]current bounding box.center)}, halfsize]
	\begin{pgfonlayer}{nodelayer}
		\node [style=wire] (0) at (1.75, -0.5) {};
		\node [style=wire] (1) at (1.75, -1) {};
		\node [style=wire] (2) at (-1.75, 1) {};
		\node [style=wire] (3) at (-1.75, -1) {};
		\node [style=wire] (4) at (-1.75, -0.5) {};
		\node [style=none] (5) at (0, 0) {};
		\node [style=Z] (6) at (0, 0) {$\alpha\! +\! \beta$};
		\node [style=none] (8) at (1.75, 0.5) {$\vdots$};
		\node [style=wire] (9) at (1.75, 1) {};
		\node [style=none] (10) at (-1.75, 0.5) {$\vdots$};
	\end{pgfonlayer}
	\begin{pgfonlayer}{edgelayer}
		\draw (5) to (9);
		\draw (6) to (5);
		\draw (4) to (6);
		\draw (3) to (6);
		\draw (6) to (2);
		\draw (5) to (1);
		\draw (5) to (0);
	\end{pgfonlayer}
\end{tikzpicture}
      & 
      \qquad\quad
      &
      \begin{tikzpicture}
	\begin{pgfonlayer}{nodelayer}
		\node [style=wire] (0) at (-0.25, 0.25) {};
		\node [style=wire] (1) at (-0.5, 0.25) {};
		\node [style=Z ] (2) at (0, 0.75) {$\alpha$};
		\node [style=none] (4) at (0.25, 0.25) {$\ldots$};
		\node [style=wire] (5) at (0.75, 0.25) {};
	\end{pgfonlayer}
	\begin{pgfonlayer}{edgelayer}
		\draw [] (5) to (2);
		\draw [, in=135, out=45, loop] (2) to ();
		\draw [] (1) to (2);
		\draw [] (2) to (0);
	\end{pgfonlayer}
\end{tikzpicture}
      \;\;\leftrightarrow\;\;
      \begin{tikzpicture}
	\begin{pgfonlayer}{nodelayer}
		\node [style=wire] (0) at (-0.25, 0.25) {};
		\node [style=wire] (1) at (-0.5, 0.25) {};
		\node [style=Z] (2) at (0, 0.75) {$\alpha$};
		\node [style=none] (4) at (0.25, 0.25) {$\ldots$};
		\node [style=wire] (5) at (0.75, 0.25) {};
	\end{pgfonlayer}
	\begin{pgfonlayer}{edgelayer}
		\draw [] (5) to (2);
		\draw [] (1) to (2);
		\draw [] (2) to (0);
	\end{pgfonlayer}
\end{tikzpicture}
      &
      \qquad \quad
      &
      \begin{tikzpicture}[halfsize]
	\begin{pgfonlayer}{nodelayer}
		\node [style=wire] (0) at (-1.5, 0) {};
		\node [style=wire] (1) at (1.5, 0) {};
		\node [style=Z] (2) at (0, 0) {};
	\end{pgfonlayer}
	\begin{pgfonlayer}{edgelayer}
		\draw (2) to (1);
		\draw (0) to (2);
	\end{pgfonlayer}
\end{tikzpicture}
      \;\;\leftrightarrow\;\;
      \begin{tikzpicture}[baseline={([yshift=-.5ex]current bounding box.center)}, halfsize]
	\begin{pgfonlayer}{nodelayer}
		\node [style=wire] (0) at (-1.5, 0) {};
		\node [style=wire] (1) at (1.5, 0) {};
	\end{pgfonlayer}
	\begin{pgfonlayer}{edgelayer}
		\draw (0) to (1);
	\end{pgfonlayer}
\end{tikzpicture}
      \\
      \text{(spider)} && \text{(anti-loop)} && \text{(identity)} 
      \\ \\
      \begin{tikzpicture}
	\begin{pgfonlayer}{nodelayer}
		\node [style=none] (0) at (0.75, 0.125) {$\vdots$};
		\node [style=wire] (1) at (0.75, -0.5) {};
		\node [style=wire] (2) at (0.75, 0.5) {};
		\node [style=Z] (3) at (0, 0)  {$\alpha$};
		\node [style=X] (5) at (-0.75, 0) {};
	\end{pgfonlayer}
	\begin{pgfonlayer}{edgelayer}
		\draw [] (5) to (3);
		\draw [] (3) to (2);
		\draw [] (3) to (1);
	\end{pgfonlayer}
\end{tikzpicture}
      \;\;\leftrightarrow\;\;
      \begin{tikzpicture}
	\begin{pgfonlayer}{nodelayer}
		\node [style=none] (0) at (0.25, 0.125) {$\vdots$};
		\node [style=wire] (1) at (0.5, -0.5) {};
		\node [style=wire] (2) at (0.5, 0.5) {};
		\node [style=X] (3) at (-0.25, -0.5) {};
		\node [style=X] (4) at (-0.25, 0.5) {};
	\end{pgfonlayer}
	\begin{pgfonlayer}{edgelayer}
		\draw (4) to (2);
		\draw (3) to (1);
	\end{pgfonlayer}
\end{tikzpicture}
      &
      \qquad 
      &
      \begin{tikzpicture}[halfsize]
	\begin{pgfonlayer}{nodelayer}
		\node [style=wire] (0) at (1.75, -0.75) {};
		\node [style=wire] (1) at (-1.25, 0) {};
		\node [style=wire] (2) at (1.75, 0.75) {};
		\node [style=X] (3) at (-0.5, 0) {$\pi$};
		\node [style=Z] (5) at (0.75, 0) {$\alpha$};
		\node [style=none] (7) at (1.75, 0.15) {$\vdots$};
	\end{pgfonlayer}
	\begin{pgfonlayer}{edgelayer}
		\draw (5) to (3);
		\draw (5) to (2);
		\draw (1) to (3);
		\draw (5) to (0);
	\end{pgfonlayer}
\end{tikzpicture}
      \;\;\leftrightarrow\;\;
      \begin{tikzpicture}[halfsize]
	\begin{pgfonlayer}{nodelayer}
		\node [style=wire] (0) at (1.75, -0.75) {};
		\node [style=wire] (1) at (-1, 0) {};
		\node [style=X] (2) at (1, -0.75) {$\pi$};
		\node [style=wire] (4) at (1.75, 0.75) {};
		\node [style=Z] (5) at (0, 0) {$-\alpha$};
		\node [style=X] (7) at (1, 0.75) {$\pi$};
		\node [style=none] (9) at (1.75, 0.15) {$\vdots$};
	\end{pgfonlayer}
	\begin{pgfonlayer}{edgelayer}
		\draw (5) to (7);
		\draw (5) to (2);
		\draw (7) to (4);
		\draw (1) to (5);
		\draw (2) to (0);
	\end{pgfonlayer}
\end{tikzpicture}
      &
      \qquad 
      &
      \begin{tikzpicture}[baseline={([yshift=-.5ex]current bounding box.center)}, halfsize]
	\begin{pgfonlayer}{nodelayer}
		\node [style=wire] (0) at (1.75, -0.5) {};
		\node [style=wire] (1) at (1.75, -1) {};
		\node [style=wire] (2) at (-1.75, 1) {};
		\node [style=wire] (3) at (-1.75, -1) {};
		\node [style=wire] (4) at (-1.75, -0.5) {};
		\node [style=X] (5) at (0.75, 0) {$\beta$};
		\node [style=Z] (7) at (-0.75, 0) {$\alpha$};
		\node [style=none] (9) at (1.75, 0.5) {$\vdots$};
		\node [style=wire] (10) at (1.75, 1) {};
		\node [style=none] (11) at (-1.75, 0.5) {$\vdots$};
	\end{pgfonlayer}
	\begin{pgfonlayer}{edgelayer}
		\draw (5) to (10);
		\draw [bend left=40, looseness=1.00] (7) to (5);
		\draw [bend right=40, looseness=1.00] (7) to (5);
		\draw (4) to (7);
		\draw (3) to (7);
		\draw (7) to (2);
		\draw (5) to (1);
		\draw (5) to (0);
	\end{pgfonlayer}
\end{tikzpicture}
      \;\;\leftrightarrow\;\;
      \begin{tikzpicture}[baseline={([yshift=-.5ex]current bounding box.center)}, halfsize]
	\begin{pgfonlayer}{nodelayer}
		\node [style=wire] (0) at (1.75, -0.5) {};
		\node [style=wire] (1) at (1.75, -1) {};
		\node [style=wire] (2) at (-1.75, 1) {};
		\node [style=wire] (3) at (-1.75, -1) {};
		\node [style=wire] (4) at (-1.75, -0.5) {};
		\node [style=X] (5) at (0.75, 0) {$\beta$};
		\node [style=Z] (7) at (-0.75, 0) {$\alpha$};
		\node [style=none] (9) at (1.75, 0.5) {$\vdots$};
		\node [style=wire] (10) at (1.75, 1) {};
		\node [style=none] (11) at (-1.75, 0.5) {$\vdots$};
	\end{pgfonlayer}
	\begin{pgfonlayer}{edgelayer}
		\draw (5) to (10);
		\draw (4) to (7);
		\draw (3) to (7);
		\draw (7) to (2);
		\draw (5) to (1);
		\draw (5) to (0);
	\end{pgfonlayer}
\end{tikzpicture}
      \\ 
      \text{(copying)}  &&  \text{($\pi$-commute)} &&\text{(hopf)} 
      \\ \\
      \begin{tikzpicture}
	\begin{pgfonlayer}{nodelayer}
		\node [style=wire] (0) at (1.25, -0.25) {};
		\node [style=wire] (1) at (1.25, 0.75) {};
		\node [style=Z] (2) at (0, -0.25) {};
		\node [style=wire] (3) at (-0.5, -0.25) {};
		\node [style=X] (4) at (0.75, -0.25) {};
		\node [style=X] (5) at (0.75, 0.75) {};
		\node [style=Z] (6) at (0, 0.75) {};
		\node [style=wire] (7) at (-0.5, 0.75) {};
	\end{pgfonlayer}
	\begin{pgfonlayer}{edgelayer}
		\draw [] (6) to (7);
		\draw [] (5) to (6);
		\draw [] (2) to (5);
		\draw [] (6) to (4);
		\draw [] (4) to (2);
		\draw [] (2) to (3);
		\draw [] (5) to (1);
		\draw [] (4) to (0);
	\end{pgfonlayer}
\end{tikzpicture}
      \;\;\leftrightarrow\;\;
      \begin{tikzpicture}
	\begin{pgfonlayer}{nodelayer}
		\node [style=wire] (0) at (1.25, 0.75) {};
		\node [style=wire] (1) at (1.25, -0.25) {};
		\node [style=wire] (2) at (-0.5, 0.75) {};
		\node [style=Z] (3) at (0.75, 0.25) {};
		\node [style=X] (4) at (0, 0.25) {};
		\node [style=wire] (5) at (-0.5, -0.25) {};
	\end{pgfonlayer}
	\begin{pgfonlayer}{edgelayer}
		\draw [] (4) to (5);
		\draw [] (4) to (3);
		\draw [] (4) to (2);
		\draw [] (3) to (1);
		\draw [] (3) to (0);
	\end{pgfonlayer}
\end{tikzpicture}
      & 
      \qquad
      & 
      \begin{tikzpicture}[halfsize]
	\begin{pgfonlayer}{nodelayer}
		\node [style=wire] (0) at (1.75, -0.75) {};
		\node [style=wire] (1) at (-1, 0) {};
		\node [style=wire] (2) at (1.75, 0.75) {};
		\node [style=hadamard] (3) at (-0.25, 0) {};
		\node [style=Z] (4) at (0.75, 0) {$\alpha$};
		\node [style=none] (6) at (1.75, 0.15) {$\vdots$};
	\end{pgfonlayer}
	\begin{pgfonlayer}{edgelayer}
		\draw (4) to (3);
		\draw (4) to (2);
		\draw (1) to (3);
		\draw (4) to (0);
	\end{pgfonlayer}
\end{tikzpicture}
      \;\;\leftrightarrow\;\;
      \begin{tikzpicture}[halfsize]
	\begin{pgfonlayer}{nodelayer}
		\node [style=wire] (0) at (1.75, -0.75) {};
		\node [style=wire] (1) at (-1, 0) {};
		\node [style=hadamard] (2) at (1, -0.75) {};
		\node [style=wire] (3) at (1.75, 0.75) {};
		\node [style=X] (4) at (0, 0) {$\alpha$};
		\node [style=hadamard] (6) at (1, 0.75) {};
		\node [style=none] (7) at (1.75, 0.15) {$\vdots$};
	\end{pgfonlayer}
	\begin{pgfonlayer}{edgelayer}
		\draw (4) to (6);
		\draw (4) to (2);
		\draw (6) to (3);
		\draw (1) to (4);
		\draw (2) to (0);
	\end{pgfonlayer}
\end{tikzpicture}
      & 
      \qquad 
      &
      \begin{tikzpicture}[baseline={([yshift=-.5ex]current bounding box.center)}]
	\begin{pgfonlayer}{nodelayer}
		\node [style=wire] (b0) at (-12.7, 4.0) {};
		\node [style=wire] (b1) at (-11.3, 4.0) {};
		\node [style=hadamard] (v0) at (-12.0, 4.0) {};
	\end{pgfonlayer}
	\begin{pgfonlayer}{edgelayer}
		\draw [style=simple] (v0) to (b1);
		\draw [style=simple] (b0) to (v0);
	\end{pgfonlayer}
\end{tikzpicture}
      \;\;\leftrightarrow\;\;
      \begin{tikzpicture}[halfsize]
	\begin{pgfonlayer}{nodelayer}
		\node [style=wire] (0) at (2.75, 0) {};
		\node [style=wire] (1) at (-1.25, 0) {};
		\node [style=X] (2) at (-0.5, 0) {$\frac{\pi}{2}$};
		\node [style=Z] (4) at (0.75, 0) {$\frac{\pi}{2}$};
		\node [style=X] (7) at (2, 0) {$\frac{\pi}{2}$};
	\end{pgfonlayer}
	\begin{pgfonlayer}{edgelayer}
		\draw (4) to (2);
		\draw (1) to (2);
		\draw (4) to (0);
	\end{pgfonlayer}
\end{tikzpicture}
      \\ \\[-1ex]
      \text{(bialgebra)}  &&\text{($H$-commute)}  &&\text{($H$-euler)} 
    \end{array}
    \]
  \end{minipage}}
  \caption{The rules of the \zxcalculus.  Note all arithmetic is
    modulo $2\pi$.}
  \label{fig:zxrules}
  \figureline
\end{figure}
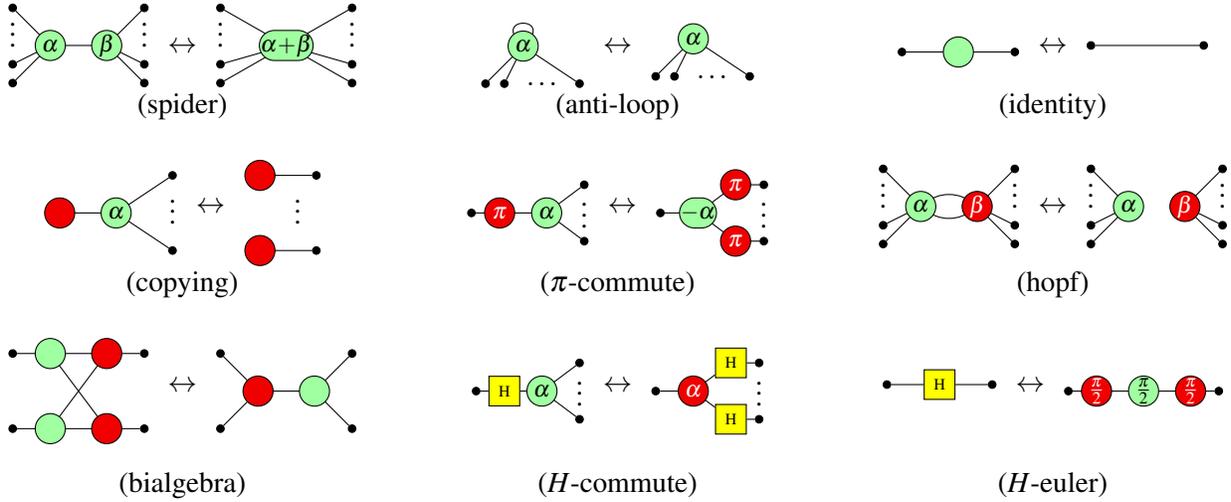

Thanks to the $H$-euler rule, the $H$ vertices are redundant.  With
this in mind, the following colour changing rule will be useful.

\begin{lemma}\label{lem:colourchange}
  The following rule is admissible:
  \[
  \inline{\tikzfig{gen-colour-change-lhs}} \eq
  \inlinetikzfig{gen-colour-change-rhs}  
  \]
where $n$ is the number of red vertices in the left hand side diagram.
\end{lemma}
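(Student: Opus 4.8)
The plan is to derive the rule from the axioms of Figure~\ref{fig:zxrules} alone. Backens' system has no primitive colour-change axiom, so all the work is done by ($H$-euler) and ($H$-commute), and it is convenient to reintroduce honest $H$ vertices during the proof even though none occur in the statement. First I record an auxiliary identity: two consecutive $H$ vertices cancel. Expanding each with the colour-swapped ($H$-euler) rule --- which the preceding text licenses --- gives the chain $Z(\tfrac{\pi}{2})\,X(\tfrac{\pi}{2})\,Z(\tfrac{\pi}{2})\,Z(\tfrac{\pi}{2})\,X(\tfrac{\pi}{2})\,Z(\tfrac{\pi}{2})$; (spider) fuses the central pair of green phases into a single $Z(\pi)$, ($\pi$-commute) slides that $Z(\pi)$ rightwards past the adjacent $X(\tfrac{\pi}{2})$ while negating its phase, and then (spider) and (identity) reduce what remains --- first $X(\tfrac{\pi}{2})X(-\tfrac{\pi}{2})$, then the three surviving green phases summing to $2\pi$ --- to a bare wire. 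Hence $H\,H \eq \mathrm{id}$.

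For the lemma itself, the left-hand side is a $Z$-spider (of arbitrary phase, which will be irrelevant) each of whose $n$ legs carries the Euler gadget $Z(\tfrac{\pi}{2})\,X(\tfrac{\pi}{2})\,Z(\tfrac{\pi}{2})$; since this gadget contains exactly one red vertex, the left-hand side indeed has $n$ red vertices in total, and ($H$-euler) rewrites each gadget into a single $H$ vertex, leaving one $Z$-spider bearing an $H$ on each of its $n$ legs. Now designate one leg as the input. The subdiagram made of that leg's $H$ followed by the spider is exactly an instance of the left-hand side of ($H$-commute), the other $n-1$ legs playing the part of its ``$\vdots$'' outputs; one application of the rule recolours the central spider to red, leaves its phase unchanged, makes the chosen leg bare, and deposits a fresh $H$ vertex on each of the remaining $n-1$ legs. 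Each of those legs now carries two successive $H$ vertices, which vanish by the auxiliary identity, and what is left is precisely the red spider on the right-hand side.

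The only step that is not purely mechanical is the cancellation $H\,H \eq \mathrm{id}$: it is folklore, but in the absence of a Hadamard-involution axiom it must be coaxed out of ($H$-euler) and ($\pi$-commute), and the direction in which the central $Z(\pi)$ is commuted has to be chosen so that the leftover green phases really do sum to $0 \bmod 2\pi$. Beyond that, the colour flip is routine once one notes that a \zxcalculus term is a PROP morphism, so any one of its legs may legitimately be singled out as the input of the ($H$-commute) pattern; the parameter $n$ then simply counts the Euler gadgets that get rewritten and, equally, the number of Hadamard pairs that get cancelled.
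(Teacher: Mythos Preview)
Your argument is correct. The derivation of $H\cdot H \eq \id{}$ from the colour-swapped $H$-euler expansion, spider fusion, $\pi$-commute and identity is sound, and the main move --- fold each Euler gadget back into an $H$ vertex, push the $H$ on one designated leg through the spider via $H$-commute, then cancel the resulting $H\cdot H$ pairs on the remaining $n-1$ legs --- is a clean and standard route to the colour-change rule. One small imprecision: the central phase is not ``irrelevant'' but \emph{preserved}; $H$-commute carries it across unchanged, so it would be better to say the argument is uniform in that phase.

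The paper organises the proof differently. It does not isolate $H\cdot H \eq \id{}$ as an auxiliary lemma at all. Instead it starts from a single bare red spider and rewrites it in place through four steps using only the $H$-rules and spider fusion, arriving at an equivalent green-centred form; the lemma in the form actually stated is then read off by post-composing $z_-$ on every leg of both sides. So where you reintroduce explicit $H$ vertices, commute one, and cancel the others pairwise, the paper keeps everything in Euler-decomposed phase form throughout and absorbs the residual bookkeeping into that final uniform $z_-$ adjustment. Your route is more modular --- the Hadamard-involution fact is independently reusable --- while the paper's is a single self-contained chain; neither is materially shorter once every rewrite is counted, and the underlying content is the same.
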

\noindent
The proof is given in Appendix \ref{sec:proofs-omitted-from}.


\section{Representing Clifford Circuits in the \zxcalculus}
\label{sec:form-cliff-circ}

For the rest of the paper we will use only the
\emph{stabilizer} sub-language of the full \zxcalculus.

\begin{definition}\label{def:stabZX}
  Let \ZXs denote the sub-PROP of \ZX obtained by restricting to those
  terms whose $Z$ and $X$ vertices are labelled only by angles
  $\alpha \in \{0, \pi/2, \pi, 3\pi/2\}$ .
\end{definition}

Since only four vertex labels can occur in the terms of \ZXs, we will
adopt a more compact notation as shown in below.
\[
\begin{tikzpicture}[baseline={([yshift=-.5ex]current bounding box.center)}]
	\begin{pgfonlayer}{nodelayer}
		\node [style=wire] (b5) at (-7.7, 2.5) {};
		\node [style=none] (b1) at (-6.3, 3.1) {$\vdots$};
		\node [style=wire] (b0) at (-6.3, 3.5) {};
		\node [style=wire] (b2) at (-6.3, 2.5) {};
		\node [style=rpp] (v0) at (-7.0, 3.0) {};
		\node [style=none] (b4) at (-7.7, 3.1) {$\vdots$};
		\node [style=wire] (b3) at (-7.7, 3.5) {};
	\end{pgfonlayer}
	\begin{pgfonlayer}{edgelayer}
		\draw [style=simple] (b3) to (v0);
		\draw [style=simple] (v0) to (b2);
		\draw [style=simple] (b0) to (v0);
		\draw [style=simple] (b5) to (v0);
	\end{pgfonlayer}
\end{tikzpicture}
\ := \  
\begin{tikzpicture}[baseline={([yshift=-.5ex]current bounding box.center)}]
	\begin{pgfonlayer}{nodelayer}
		\node [style=wire] (b5) at (-7.7, 2.5) {};
		\node [style=none] (b1) at (-6.3, 3.1) {$\vdots$};
		\node [style=wire] (b0) at (-6.3, 3.5) {};
		\node [style=wire] (b2) at (-6.3, 2.5) {};
		\node [style=X] (v0) at (-7.0, 3.0)  {$\pi/2$};
		\node [style=none] (b4) at (-7.7, 3.1) {$\vdots$};
		\node [style=wire] (b3) at (-7.7, 3.5) {};
	\end{pgfonlayer}
	\begin{pgfonlayer}{edgelayer}
		\draw [style=simple] (b3) to (v0);
		\draw [style=simple] (v0) to (b2);
		\draw [style=simple] (b0) to (v0);
		\draw [style=simple] (b5) to (v0);
	\end{pgfonlayer}
\end{tikzpicture}
\qquad \qquad
\begin{tikzpicture}[baseline={([yshift=-.5ex]current bounding box.center)}]
	\begin{pgfonlayer}{nodelayer}
		\node [style=wire] (b5) at (-7.7, 2.5) {};
		\node [style=none] (b1) at (-6.3, 3.1) {$\vdots$};
		\node [style=wire] (b0) at (-6.3, 3.5) {};
		\node [style=wire] (b2) at (-6.3, 2.5) {};
		\node [style=rpi] (v0) at (-7.0, 3.0) {};
		\node [style=none] (b4) at (-7.7, 3.1) {$\vdots$};
		\node [style=wire] (b3) at (-7.7, 3.5) {};
	\end{pgfonlayer}
	\begin{pgfonlayer}{edgelayer}
		\draw [style=simple] (b3) to (v0);
		\draw [style=simple] (v0) to (b2);
		\draw [style=simple] (b0) to (v0);
		\draw [style=simple] (b5) to (v0);
	\end{pgfonlayer}
\end{tikzpicture}
\ := \  
\begin{tikzpicture}[baseline={([yshift=-.5ex]current bounding box.center)}]
	\begin{pgfonlayer}{nodelayer}
		\node [style=wire] (b5) at (-7.7, 2.5) {};
		\node [style=none] (b1) at (-6.3, 3.1) {$\vdots$};
		\node [style=wire] (b0) at (-6.3, 3.5) {};
		\node [style=wire] (b2) at (-6.3, 2.5) {};
		\node [style=X] (v0) at (-7.0, 3.0) {$\pi$};
		\node [style=none] (b4) at (-7.7, 3.1) {$\vdots$};
		\node [style=wire] (b3) at (-7.7, 3.5) {};
	\end{pgfonlayer}
	\begin{pgfonlayer}{edgelayer}
		\draw [style=simple] (b3) to (v0);
		\draw [style=simple] (v0) to (b2);
		\draw [style=simple] (b0) to (v0);
		\draw [style=simple] (b5) to (v0);
	\end{pgfonlayer}
\end{tikzpicture}
\qquad \qquad
\begin{tikzpicture}[baseline={([yshift=-.5ex]current bounding box.center)}]
	\begin{pgfonlayer}{nodelayer}
		\node [style=wire] (b5) at (-7.7, 2.5) {};
		\node [style=none] (b1) at (-6.3, 3.1) {$\vdots$};
		\node [style=wire] (b0) at (-6.3, 3.5) {};
		\node [style=wire] (b2) at (-6.3, 2.5) {};
		\node [style=rmm] (v0) at (-7.0, 3.0) {};
		\node [style=none] (b4) at (-7.7, 3.1) {$\vdots$};
		\node [style=wire] (b3) at (-7.7, 3.5) {};
	\end{pgfonlayer}
	\begin{pgfonlayer}{edgelayer}
		\draw [style=simple] (b3) to (v0);
		\draw [style=simple] (v0) to (b2);
		\draw [style=simple] (b0) to (v0);
		\draw [style=simple] (b5) to (v0);
	\end{pgfonlayer}
\end{tikzpicture}
\ := \  
\begin{tikzpicture}[baseline={([yshift=-.5ex]current bounding box.center)}]
	\begin{pgfonlayer}{nodelayer}
		\node [style=wire] (b5) at (-7.7, 2.5) {};
		\node [style=none] (b1) at (-6.3, 3.1) {$\vdots$};
		\node [style=wire] (b0) at (-6.3, 3.5) {};
		\node [style=wire] (b2) at (-6.3, 2.5) {};
		\node [style=X] (v0) at (-7.0, 3.0)  {$3\pi/2$};
		\node [style=none] (b4) at (-7.7, 3.1) {$\vdots$};
		\node [style=wire] (b3) at (-7.7, 3.5) {};
	\end{pgfonlayer}
	\begin{pgfonlayer}{edgelayer}
		\draw [style=simple] (b3) to (v0);
		\draw [style=simple] (v0) to (b2);
		\draw [style=simple] (b0) to (v0);
		\draw [style=simple] (b5) to (v0);
	\end{pgfonlayer}
\end{tikzpicture}
\]
\[
\begin{tikzpicture}[baseline={([yshift=-.5ex]current bounding box.center)}]
	\begin{pgfonlayer}{nodelayer}
		\node [style=wire] (b5) at (-7.7, 2.5) {};
		\node [style=none] (b1) at (-6.3, 3.1) {$\vdots$};
		\node [style=wire] (b0) at (-6.3, 3.5) {};
		\node [style=wire] (b2) at (-6.3, 2.5) {};
		\node [style=gpp] (v0) at (-7.0, 3.0) {};
		\node [style=none] (b4) at (-7.7, 3.1) {$\vdots$};
		\node [style=wire] (b3) at (-7.7, 3.5) {};
	\end{pgfonlayer}
	\begin{pgfonlayer}{edgelayer}
		\draw [style=simple] (b3) to (v0);
		\draw [style=simple] (v0) to (b2);
		\draw [style=simple] (v0) to (b0);
		\draw [style=simple] (b5) to (v0);
	\end{pgfonlayer}
\end{tikzpicture}
\ := \  
\begin{tikzpicture}[baseline={([yshift=-.5ex]current bounding box.center)}]
	\begin{pgfonlayer}{nodelayer}
		\node [style=wire] (b5) at (-7.7, 2.5) {};
		\node [style=none] (b1) at (-6.3, 3.1) {$\vdots$};
		\node [style=wire] (b0) at (-6.3, 3.5) {};
		\node [style=wire] (b2) at (-6.3, 2.5) {};
		\node [style=Z] (v0) at (-7.0, 3.0) {$\pi/2$};
		\node [style=none] (b4) at (-7.7, 3.1) {$\vdots$};
		\node [style=wire] (b3) at (-7.7, 3.5) {};
	\end{pgfonlayer}
	\begin{pgfonlayer}{edgelayer}
		\draw [style=simple] (b3) to (v0);
		\draw [style=simple] (v0) to (b2);
		\draw [style=simple] (v0) to (b0);
		\draw [style=simple] (b5) to (v0);
	\end{pgfonlayer}
\end{tikzpicture}
\qquad \qquad
\begin{tikzpicture}[baseline={([yshift=-.5ex]current bounding box.center)}]
	\begin{pgfonlayer}{nodelayer}
		\node [style=wire] (b5) at (-7.7, 2.5) {};
		\node [style=none] (b1) at (-6.3, 3.1) {$\vdots$};
		\node [style=wire] (b0) at (-6.3, 3.5) {};
		\node [style=wire] (b2) at (-6.3, 2.5) {};
		\node [style=gpi] (v0) at (-7.0, 3.0) {};
		\node [style=none] (b4) at (-7.7, 3.1) {$\vdots$};
		\node [style=wire] (b3) at (-7.7, 3.5) {};
	\end{pgfonlayer}
	\begin{pgfonlayer}{edgelayer}
		\draw [style=simple] (b3) to (v0);
		\draw [style=simple] (v0) to (b2);
		\draw [style=simple] (v0) to (b0);
		\draw [style=simple] (b5) to (v0);
	\end{pgfonlayer}
\end{tikzpicture}
\ := \  
\begin{tikzpicture}[baseline={([yshift=-.5ex]current bounding box.center)}]
	\begin{pgfonlayer}{nodelayer}
		\node [style=wire] (b5) at (-7.7, 2.5) {};
		\node [style=none] (b1) at (-6.3, 3.1) {$\vdots$};
		\node [style=wire] (b0) at (-6.3, 3.5) {};
		\node [style=wire] (b2) at (-6.3, 2.5) {};
		\node [style=Z] (v0) at (-7.0, 3.0) {$\pi$};
		\node [style=none] (b4) at (-7.7, 3.1) {$\vdots$};
		\node [style=wire] (b3) at (-7.7, 3.5) {};
	\end{pgfonlayer}
	\begin{pgfonlayer}{edgelayer}
		\draw [style=simple] (b3) to (v0);
		\draw [style=simple] (v0) to (b2);
		\draw [style=simple] (v0) to (b0);
		\draw [style=simple] (b5) to (v0);
	\end{pgfonlayer}
\end{tikzpicture}
\qquad \qquad
\begin{tikzpicture}[baseline={([yshift=-.5ex]current bounding box.center)}]
	\begin{pgfonlayer}{nodelayer}
		\node [style=wire] (b5) at (-7.7, 2.5) {};
		\node [style=none] (b1) at (-6.3, 3.1) {$\vdots$};
		\node [style=wire] (b0) at (-6.3, 3.5) {};
		\node [style=wire] (b2) at (-6.3, 2.5) {};
		\node [style=gmm] (v0) at (-7.0, 3.0) {};
		\node [style=none] (b4) at (-7.7, 3.1) {$\vdots$};
		\node [style=wire] (b3) at (-7.7, 3.5) {};
	\end{pgfonlayer}
	\begin{pgfonlayer}{edgelayer}
		\draw [style=simple] (b3) to (v0);
		\draw [style=simple] (v0) to (b2);
		\draw [style=simple] (v0) to (b0);
		\draw [style=simple] (b5) to (v0);
	\end{pgfonlayer}
\end{tikzpicture}
\ := \  
\begin{tikzpicture}[baseline={([yshift=-.5ex]current bounding box.center)}]
	\begin{pgfonlayer}{nodelayer}
		\node [style=wire] (b5) at (-7.7, 2.5) {};
		\node [style=none] (b1) at (-6.3, 3.1) {$\vdots$};
		\node [style=wire] (b0) at (-6.3, 3.5) {};
		\node [style=wire] (b2) at (-6.3, 2.5) {};
		\node [style=Z] (v0) at (-7.0, 3.0) {$3\pi/2$};
		\node [style=none] (b4) at (-7.7, 3.1) {$\vdots$};
		\node [style=wire] (b3) at (-7.7, 3.5) {};
	\end{pgfonlayer}
	\begin{pgfonlayer}{edgelayer}
		\draw [style=simple] (b3) to (v0);
		\draw [style=simple] (v0) to (b2);
		\draw [style=simple] (v0) to (b0);
		\draw [style=simple] (b5) to (v0);
	\end{pgfonlayer}
\end{tikzpicture}
\]
Since the one-qubit instances of these will occur very frequently in
the rest of the paper, we introduce the names:
\[
\begin{array}{cccccccccccc}
  \rpp & \qquad & \rpi & \qquad & \rmm  & \qquad &
  \gpp & \qquad  & \gpi & \qquad & \gmm \\

  \vstrut{1.2em}
  x_+ : 1 \to 1 && x : 1 \to 1 && x_- : 1 \to 1 &&
  z_+ : 1 \to 1 && z : 1 \to 1 && z_- : 1 \to 1\\
\end{array}
\]
We refer to $x$ and $z$ as the \emph{Paulis}.
 
We define the \emph{translation} functor $T:\Cliff \to \ZXs$ on the
generators as shown below.
\[
\begin{array}{lcccccccccc}
  T\left(\mkern-18mu \circS \mkern-25mu\right) & = & \gpp  & \qquad &
  T\left(\mkern-18mu  \circZ{{}} \mkern-25mu\right) & = & \gpi & \qquad \\\\
  T\left(\mkern-18mu  \circV \mkern-25mu\right) & = &\rpp & \qquad &
  T\left(\mkern-18mu  \circX{{}} \mkern-25mu\right) & = & \rpi & \qquad \\\\
  T\left(\mkern-18mu  \circH \mkern-25mu\right) & = & \begin{tikzpicture}[baseline={([yshift=-.5ex]current bounding box.center)}]
	\begin{pgfonlayer}{nodelayer}
		\node [style=wire] (b0) at (-12.7, 4.0) {};
		\node [style=wire] (b1) at (-11.3, 4.0) {};
		\node [style=hadamard] (v0) at (-12.0, 4.0) {};
	\end{pgfonlayer}
	\begin{pgfonlayer}{edgelayer}
		\draw [style=simple] (v0) to (b1);
		\draw [style=simple] (b0) to (v0);
	\end{pgfonlayer}
\end{tikzpicture}  & \qquad &
  T\left(\mkern-15mu  \circCX \mkern-22mu\right) & = & \begin{tikzpicture}[baseline={([yshift=-.5ex]current bounding box.center)}, halfsize]
	\begin{pgfonlayer}{nodelayer}
		\node [style=wire] (b1) at (-4.5, 2.5) {};
		\node [style=wire] (b0) at (-7.5, 2.5) {};
		\node [style=wire] (b2) at (-7.5, 1.0) {};
		\node [style=Z dot] (v1) at (-6.0, 2.5) {};
		\node [style=X dot] (v0) at (-6.0, 1.0) {};
		\node [style=wire] (b3) at (-4.5, 1.0) {};
	\end{pgfonlayer}
	\begin{pgfonlayer}{edgelayer}
		\draw [style=simple] (v0) to (b3);
		\draw [style=simple] (v1) to (v0);
		\draw [style=simple] (v0) to (b2);
		\draw [style=simple] (b0) to (v1);
		\draw [style=simple] (b1) to (v1);
	\end{pgfonlayer}
\end{tikzpicture} & \qquad \\\\
\end{array}
\]
This extends to the whole PROP in the usual way.  Further we have the following:
\begin{proposition}
  \label{prop:sanity}
  For all $c : n\to m$ in \Cliff we have $\denote{c}_C = \denote{T(c)}$.
\end{proposition}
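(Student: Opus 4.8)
The strategy is to reduce the identity to a finite check on generators via the freeness of $\Cliff$. Since $\Cliff$ is the free $\dag$-PROP on the generators $S$, $V$, $\CNOT$, $Z$, $X$, $H$ (the swap $\sigma$ being supplied automatically by the symmetric monoidal structure), two symmetric monoidal functors out of $\Cliff$ that agree on these generators are equal. Now $\denote{\cdot}_C$ is symmetric monoidal by construction, and the composite $\denote{T(\cdot)}$ is symmetric monoidal because $T$ is a PROP morphism and $\denote{\cdot}:\ZX\to\fdhilb$ is symmetric monoidal (Definition~\ref{def:zxsemantics}); hence it suffices to verify $\denote{c}_C = \denote{T(c)}$ for $c$ ranging over the six generators. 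All equalities are to be read up to the global scalars that are quotiented out both in the interpretation of $\Cliff$ and in $\fdhilb$ (Remarks~\ref{rem:scalars} and~\ref{rem:fdhilb}).

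For the one-qubit generators this is a short computation from Definition~\ref{def:zxsemantics}. The vertex $T(S) = z_+$ acts as $\ket{0}\mapsto\ket{0}$ and $\ket{1}\mapsto e^{i\pi/2}\ket{1}$, which is the matrix $S$; likewise $T(Z) = z$ is the Pauli $Z = S^2$. For $T(V) = x_+$ and $T(X) = x$ one passes to the computational basis using $\ket{\pm} = \tfrac{1}{\sqrt 2}(\ket{0}\pm\ket{1})$ together with $1+i = \sqrt 2\,\omega$ and $1-i = \sqrt 2\,\bar\omega$, recovering exactly the matrices $V$ and $V^2$. Finally $T(H)$ is the $H$-vertex, with $\denote{H} = \tfrac{1}{\sqrt 2}\bigl(\begin{smallmatrix}1&1\\1&-1\end{smallmatrix}\bigr)$; a direct multiplication shows $SVS = \tfrac{\omega}{\sqrt 2}\bigl(\begin{smallmatrix}1&1\\1&-1\end{smallmatrix}\bigr)$, so $\bar\omega\, SVS = \denote{H}$.

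The case of $\CNOT$ carries essentially all of the computational weight. Here $T(\CNOT)$ is the standard \zxcalculus diagram for the controlled-$X$ gate: a phaseless green ($Z$) vertex on the control wire joined by a single edge to a phaseless red ($X$) vertex on the target wire. I would evaluate this diagram by composing the spider interpretations $\denote{Z(0):1\to 2}$ and $\denote{X(0):2\to 1}$ given in Definition~\ref{def:zxsemantics} — or, equivalently, rewrite the $X$-vertex as a $Z$-vertex with $H$-boxes on all three of its legs and reuse the value of $\denote{H}$ just computed — obtaining the matrix $\CNOT$ up to an overall factor of $\tfrac{1}{\sqrt 2}$. The one point requiring care is bookkeeping this scalar correctly rather than any conceptual difficulty; once it is dealt with, the proposition follows from the freeness argument of the first paragraph.
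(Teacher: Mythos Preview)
Your argument is correct and in fact considerably more detailed than what the paper offers: the paper states Proposition~\ref{prop:sanity} without proof, treating it as an immediate sanity check (hence the label) that follows from the definitions of $\denote{\cdot}_C$, $T$, and $\denote{\cdot}$. Your explicit reduction to generators via freeness of \Cliff, together with the matrix computations for each generator, is exactly the verification one would carry out if pressed, and nothing in it is wrong.
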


\noindent
Subject to the proviso that the variables $\alpha$ and $\beta$
occurring in the rules of Figure \ref{fig:zxrules} must take values in
the set $\{0, \pi/2, \pi, 3\pi/2\}$, it is immediate that if $a
\eq b$ then $a$ is in \ZXs if and only if $b$ is. Further, the
equational theory of \ZXs is sound and complete for its standard
interpretation.

\begin{theorem}[Backens \cite{1367-2630-16-9-093021}]
  \label{thm:sound-and-complete}
  Let $a$ and $b$ be terms of \ZXs. Then $a \eq b$ if and only if
  $\denote{a} = \denote{b}$.
\end{theorem}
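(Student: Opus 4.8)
The plan is to prove the two directions separately. The forward (soundness) direction, $a \eq b \Rightarrow \denote{a} = \denote{b}$, is a routine verification; the converse (completeness) is precisely Backens' completeness theorem for the stabilizer \zxcalculus \cite{1367-2630-16-9-093021}, so what remains is to confirm that the hypotheses of that theorem match the setting used here.

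For soundness, I would first recall that $\denote{\cdot}$ is a strict $\dag$-symmetric monoidal functor on \ZX (Definition \ref{def:zxsemantics}), so it respects the composition, tensor and congruence structure used to build $\eq$ from $\leftrightarrow$ in Definition \ref{def:equiv}. It therefore suffices to check, for each generating pair $(l,r)$ in Figure \ref{fig:zxrules}, that $\denote{l} = \denote{r}$ --- an equality of linear maps on small fixed-dimensional spaces, to be verified by direct computation, and only up to a non-zero scalar since \fdhilb carries the quotient of Remark \ref{rem:scalars}. Each such check is elementary: spider fusion, copying, $\pi$-commute, Hopf, bialgebra, $H$-commute and $H$-euler are all standard identities for the relevant Frobenius and Hopf structures. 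Since \ZXs is a sub-PROP of \ZX, this immediately yields $a \eq b \Rightarrow \denote{a} = \denote{b}$ for \ZXs terms as well.

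For completeness, the point is that the rule set of Figure \ref{fig:zxrules} is a scalar-free presentation of the axiom system under which Backens establishes completeness of the stabilizer fragment. Concretely, I would (i) match each rule in Figure \ref{fig:zxrules} with the corresponding axiom of \cite{1367-2630-16-9-093021}, deriving any that appear in a cosmetically different form --- in particular checking that taking $H$-euler (the Euler decomposition of the Hadamard vertex as $X(\pi/2)\,Z(\pi/2)\,X(\pi/2)$) as primitive, together with $H$-commute, recovers whatever Hadamard axioms Backens uses, and conversely; and (ii) observe that since all angle arithmetic in Figure \ref{fig:zxrules} is addition modulo $2\pi$ and the set $\{0,\pi/2,\pi,3\pi/2\}$ is closed under addition, any rewrite between two \ZXs terms can be taken through \ZXs terms only, so Backens' completeness for the stabilizer fragment applies directly. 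Combining (i) and (ii), if $a,b \in \ZXs$ and $\denote{a} = \denote{b}$ then $a \eq b$.

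The main obstacle is step (i): verifying that the concise rule set displayed here really is interderivable with the exact axiomatisation used in \cite{1367-2630-16-9-093021}. This is largely bookkeeping, but the derivations involving Hadamard --- reconstructing the colour-change and Hadamard-cancellation behaviour from $H$-euler and $H$-commute --- are the least routine part, and one must take care that these derivations do not leave the stabilizer angle set, since otherwise the transfer of completeness to \ZXs in step (ii) would require further argument. Everything else is either a short linear-algebra computation or a direct appeal to the cited theorem.
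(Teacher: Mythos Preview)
The paper does not prove this theorem at all: it is stated with attribution to Backens \cite{1367-2630-16-9-093021} and immediately used as a black box. Your proposal --- verify soundness by checking each rule of Figure~\ref{fig:zxrules} under $\denote{\cdot}$, and obtain completeness by matching the present axiomatisation to Backens' and invoking her result --- is exactly the right shape, and is consistent with how the paper treats the theorem; there is simply nothing in the paper to compare against beyond the citation itself.
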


\noindent
This theorem guarantees that if a given Clifford circuit has a smaller
equivalent circuit then there is a proof of their equivalence in the
\zxcalculus.  The challenge is to find it.  We start by considering
minimal forms for \C 1 and \C 2.

\subsection{1-Qubit Cliffords}
\label{sec:1-qubit-cliffords}

Let \CC1 denote the 24 diagrams shown Figure~\ref{fig:cc1}.  Note that
all of them are in the image of the translation functor $T$, so they
correspond to Clifford circuits.  Further, each of the diagrams has a
distinct interpretation under $\denote\cdot$, so they cover \C 1 and,
by soundness, they are not equivalent.

\begin{figure}[t]
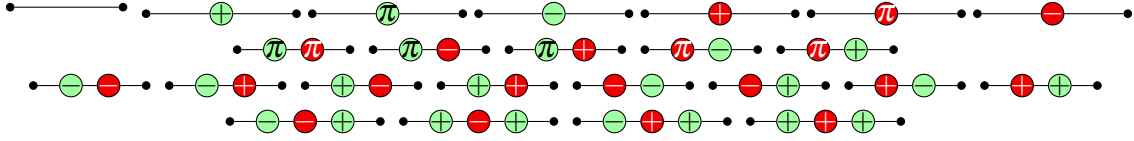

  \centering

\tikzfig{zx-definition/axioms/identity_2}
\gpp
\gpi
\gmm
\rpp
\rpi
\rmm

\tikzfig{cliffords/CC1_2_gpirpi}
\tikzfig{cliffords/CC1_2_gpirmm}
\tikzfig{cliffords/CC1_2_gpirpp}
\tikzfig{cliffords/CC1_2_rpigmm}
\tikzfig{cliffords/CC1_2_rpigpp}

\tikzfig{cliffords/CC1_2_gmmrmm}
\tikzfig{cliffords/CC1_2_gmmrpp}
\tikzfig{cliffords/CC1_2_gpprmm}
\tikzfig{cliffords/CC1_2_gpprpp}
\tikzfig{cliffords/CC1_2_rmmgmm}
\tikzfig{cliffords/CC1_2_rmmgpp}
\tikzfig{cliffords/CC1_2_rppgmm}
\tikzfig{cliffords/CC1_2_rppgpp}

\tikzfig{cliffords/CC1_3_pmm}
\tikzfig{cliffords/CC1_3_pmp}
\tikzfig{cliffords/CC1_3_ppm}
\tikzfig{cliffords/CC1_3_ppp}

  \caption{\CC 1 : standard  minimal forms for single-qubit Cliffords}
  \label{fig:cc1}
  \figureline
\end{figure}


\begin{proposition}\label{prop:c1-nf}
  Given $t \in \Cliff(1,1)$ then $T(t) \eq c$ for some $c \in \CC 1$.
\begin{proof}
  From its type, we have $\denote{t}_C \in \C1$ and since
  $\sizeof{\C1} = 24$, necessarily $\denote{t}_C = \denote{c}$ for
  some $c \in \CC1$, and thus by Proposition~\ref{prop:sanity}
  $\denote{T(t)} = \denote{c}$.  From this point the result follows by
  Theorem~\ref{thm:sound-and-complete}.
\end{proof}
\end{proposition}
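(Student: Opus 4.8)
The plan is to reduce the statement to Backens' completeness theorem (Theorem~\ref{thm:sound-and-complete}) by a counting argument, so that no explicit rewriting is needed. First I would observe that since $t : 1 \to 1$ in \Cliff, its standard interpretation satisfies $\denote{t}_C \in \C1$, and by Proposition~\ref{prop:sanity} this coincides with $\denote{T(t)}$. Thus it suffices to exhibit some $c \in \CC1$ with $\denote{c} = \denote{T(t)}$ and then invoke completeness; note that both $T(t)$ and $c$ are terms of \ZXs, so the theorem applies.

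The heart of the argument is that the $24$ diagrams of Figure~\ref{fig:cc1} represent \emph{exactly} the $24$ elements of \C1. Each such diagram lies in the image of $T$, so its interpretation is a genuine element of \C1; and, as recorded in the discussion preceding the proposition, the $24$ interpretations are pairwise distinct (checkable by direct matrix computation if one wishes to be fully explicit). Since $\sizeof{\C1} = 24$, a pigeonhole argument shows that these interpretations enumerate all of \C1. In particular there is some $c \in \CC1$ with $\denote{c} = \denote{t}_C = \denote{T(t)}$, and Theorem~\ref{thm:sound-and-complete} then yields $T(t) \eq c$, which is the claim.

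The only subtlety is really bookkeeping rather than a genuine obstacle: one must be confident that the $24$ listed diagrams genuinely have distinct semantics, so that the pigeonhole step is valid. Everything else is immediate once Backens' completeness theorem is in hand; in particular the \emph{existence} of a reduction of $T(t)$ to some member of \CC1 is guaranteed abstractly, whereas its effective computation — the concern of the rest of the paper — is not addressed here.
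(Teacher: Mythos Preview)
Your proposal is correct and follows essentially the same argument as the paper's proof: use Proposition~\ref{prop:sanity} to identify $\denote{T(t)}$ with an element of $\C1$, invoke the fact that the $24$ diagrams of $\CC1$ have distinct interpretations and hence cover $\C1$, and conclude via Theorem~\ref{thm:sound-and-complete}. (The paper also supplies a separate constructive proof, but the proof attached to the proposition is exactly this counting-plus-completeness argument.)
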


\noindent
Since we will require an effective procedure later, we offer an
alternative, constructive, proof which produces the required $c$.  The
ideas of this proof form an important part of our optimisation
procedure.

Recall that a \emph{line graph} is a framed open graph with one input
and one output, which is connected and whose vertices all have
degree two.  In other words, the graph has the shape of a line. We can
treat a line graph as a sequence of vertices starting from the input;
let $\pos v$ denote the sequence position of vertex $v$, \ie 
if $v$ is the first vertex then $\pos v = 0$.

\begin{lemma}\label{lem:lineNF}
  Let $t\in \ZXs(1,1)$ be a line graph; then $t \eq s$ for some
  $s\in\ZXs(1,1)$ which has the following properties:
  \begin{itemize}
  \item its vertices form an alternating sequence of $Z(\alpha)$ and
    $X(\beta)$ vertices, where none of the labels are zero;
  \item it contains at most two Pauli vertices, and if present these
    are first vertices after the input.
  \end{itemize}
  We say that $s$ is in \emph{Pauli-standard} form.
  \begin{proof}
    Observe that by applying the $H$-euler rule all the $H$
    vertices can be removed.  Then, by repeated application of the 
    spider and identity rules, we obtain a graph of alternating
    coloured vertices, none of which are labelled by zero.  Now we
    proceed by induction; define 
    \[
    m(t) = \sum_{v \text{ is a Pauli}} \pos v
    \]
    Observe that if $m(t) < 2$ then $t$ has the required form.
    Otherwise there is at least one Pauli vertex $v$ with $\pos v \geq
    2$. Suppose that $v$ is a Pauli Z vertex; then we may rewrite $t
    \eq t'$ by the following rewrites:
    \[
    \inlinetikzfig{cliffords/pauli-standard-i}
    \rw
    \inlinetikzfig{cliffords/pauli-standard-ii}
    \rw 
    \inlinetikzfig{cliffords/pauli-standard-iii}
    \]
    and applying the identity rule if possible.  Note that $Z(\alpha +
    \pi)$ is never a Pauli, so the number of Paulis is not increased,
    and the number vertices overall is reduced.  If $X(\gamma-\beta)$
    is a Pauli then its position is $\pos v - 1$.  Hence $m(t') <
    m(t)$, and result is achieved by induction.  If $v$ is a Pauli X
    vertex,  then the same argument goes through with the colours
    reversed.
  \end{proof}
\end{lemma}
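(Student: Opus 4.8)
The plan is to normalise $t$ in three stages: eliminate the Hadamard vertices, collapse the diagram to an alternating, zero-free chain of phase vertices, and then slide the Pauli vertices to the front.

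For the first stage I would apply the $H$-euler rule once at every Hadamard vertex, replacing each by the three-vertex word $X(\pi/2)\,Z(\pi/2)\,X(\pi/2)$; since every vertex of a line graph has degree two this substitution is purely local and yields again a line graph, now built only from $Z(\alpha)$ and $X(\beta)$ vertices. For the second stage I would repeatedly fuse two adjacent vertices of the same colour with the spider rule (adding their labels modulo $2\pi$) and delete any vertex that has acquired the label $0$ with the identity rule; each such step strictly reduces the vertex count, so this terminates, and at termination neighbours carry opposite colours and no label is $0$ --- precisely the first bullet of the statement.

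The third stage carries the content of the lemma, and I would run it as an induction on $m(t)=\sum_{v\text{ a Pauli}}\pos{v}$, maintaining as an invariant that the diagram stays in the alternating, zero-free shape of stage two. Call the diagram \emph{irreducible} when no Pauli sits at position $\geq 2$; an irreducible diagram then has at most one Pauli at each of positions $0$ and $1$ --- hence at most two Paulis, necessarily of opposite colours --- and one further commutation (using the colour-swapped version of $\pi$-commute where appropriate, admissible by the $H$-commute rule) drags a lone position-$1$ Pauli to position $0$, putting the diagram into Pauli-standard form. When the diagram is not irreducible I pick a Pauli $v$ with $\pos{v}\geq 2$; its left neighbour has the opposite colour, so $\pi$-commute moves $v$ one place leftward while negating that neighbour's label, and negation sends a nonzero non-Pauli label to another nonzero non-Pauli label, so the ``no zero labels'' part of the invariant is preserved once the one or two same-colour adjacencies created by the move are re-fused as in stage two. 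A short case analysis then shows $m(t)$ strictly drops: either $v$ simply advances a place; or $v$ is absorbed by the vertex just beyond its former neighbour; or $v$ meets another same-colour Pauli and the pair annihilates because $\pi+\pi\equiv 0\pmod{2\pi}$; and although the re-fusion on the far side may produce one fresh Pauli, it lands at a position small enough that the total still decreases. Induction then delivers Pauli-standard form, and since same-colour Paulis can only cancel, never accumulate, the ``at most two'' bound with the survivors at the front comes for free.

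The main obstacle is exactly the bookkeeping of the third stage: one must verify that \emph{every} branch of the ``commute-then-re-fuse'' step --- above all the branch that destroys one Pauli but spawns another --- genuinely lowers $m(t)$, and one must check that the final cleanup of a lone position-$1$ Pauli cannot itself reintroduce a Pauli at position $\geq 2$. Stages one and two are routine terminating rewrites and should present no difficulty.
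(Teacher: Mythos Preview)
Your three-stage plan and the choice of $m(t)=\sum_{v\text{ Pauli}}\pos{v}$ as the induction measure match the paper's proof exactly. The only notable difference is that the paper performs the $\pi$-commute together with \emph{both} spider fusions as a single atomic step---absorbing the Pauli into its same-colour neighbour two positions back---which makes your ``$v$ simply advances a place'' case vacuous (the alternating invariant forces a same-colour vertex at $\pos{v}-2$, so absorption always occurs) and lets one read off directly that the only possible fresh Pauli lands at position $\pos{v}-1$, dissolving most of the bookkeeping you flag as the main obstacle.
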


\begin{lemma}\label{lem:hadamard-equiv}
  We have the following equations:
  \begin{gather*}
    \HcliffI \eq \HcliffII \eq \HcliffV \eq \HcliffVI \\
    \HcliffIb \eq \HcliffIIb \eq \HcliffIVb \eq \HcliffIIIb \\
    \HcliffIc \eq \HcliffIIc \eq \HcliffIVc \eq \HcliffIIIc \\
    \HcliffId \eq \HcliffIId \eq \HcliffIVd \eq \HcliffIIId 
  \end{gather*}
  \begin{proof}
    For the first sequence of equations, consider the following
    rewrites 
    \begin{multline*}
      \HcliffI \rw \HcliffII \rw \HcliffIII \\ \rw \HcliffIV \rw \HcliffV
      \rw \HcliffVI
    \end{multline*}
    where the first and last steps use Lemma
    \ref{lem:colourchange}.  The other three cases are established
    using very similar reasoning.
  \end{proof}
\end{lemma}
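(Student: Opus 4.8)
The plan is to establish all four displayed lines by a single rewrite template, writing it out in detail only for the first line and noting that the remaining three follow by identical bookkeeping. To set up, note that every diagram occurring in the lemma is a one-qubit \emph{line graph} --- a composite of at most four single-vertex phase gates --- and hence an element of $\ZXs(1,1)$. In each line these diagrams are just different Euler-angle-style decompositions of one and the same one-qubit Clifford unitary (in the first line this unitary is the Hadamard $H$, by the $H$-euler rule of Figure~\ref{fig:zxrules}). Consequently, by Theorem~\ref{thm:sound-and-complete} it would already be enough to check that the two sides of each equation have equal $2\times 2$ matrices under $\denote{\cdot}$; but since these rewrites feed directly into the optimiser, I prefer to exhibit explicit derivations, following the six-diagram chain indicated in the statement.

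For the first line I carry out the chain step by step. The opening step recolours the three-spider line using the generalised colour-change rule of Lemma~\ref{lem:colourchange} instantiated at $n = 1$: this swaps the colour of the central spider and emits the $\pi$-phase (Pauli) correction prescribed by the parity of $n$, after which one spider fusion and an application of the identity rule restore alternating three-spider form. The interior steps use only the basic rules of Figure~\ref{fig:zxrules}: the $\pi$-commute rule slides the Pauli correction past a neighbouring $X(\pm\pi/2)$ or $Z(\pm\pi/2)$ spider --- flipping that label between $\pi/2$ and $3\pi/2$, since all arithmetic is mod $2\pi$ --- and then spider fusion together with the identity rule reabsorb it. The closing step is a second application of Lemma~\ref{lem:colourchange}, this time to the central green spider. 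Since $\eq$ is the least equivalence relation containing the one-step rewrite relation, composing the chain relates all of its endpoints and yields the four asserted equivalences of the first line at once.

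The remaining three lines (the $b$, $c$ and $d$ families) are handled by the same template: each member is an alternating three-spider line whose middle spider can be colour-swapped via Lemma~\ref{lem:colourchange}, after which $\pi$-commute pushes the resulting Pauli correction outward and a spider fusion plus the identity rule return the diagram to three-spider form, reproducing the next member of that line's chain. The four families differ only in which of the various labels are $\pi/2$ and which are $3\pi/2$, so the rewrite sequences have the same shape and only the modulo-$2\pi$ phase arithmetic changes. It therefore suffices to present the first chain in full, with the other three following \emph{mutatis mutandis}; this is precisely the content of ``very similar reasoning'' in the statement.

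The one genuinely fiddly aspect is the phase bookkeeping. Each use of Lemma~\ref{lem:colourchange} introduces a $\pi$-phase correction whose placement depends on the parity of $n$, and every time such a correction is commuted across a $\pm\pi/2$ spider its label flips between $\pi/2$ and $3\pi/2$; one must track these flips carefully to verify that, after the final fusion, the diagram is exactly the intended member of $\CC 1$ rather than some other element of the $24$-element set. This is entirely mechanical, and if an elegant hand presentation proves awkward one can always fall back on Theorem~\ref{thm:sound-and-complete} and verify the handful of $2\times 2$ matrix identities directly, at the cost of forgoing the explicit rewrite proof.
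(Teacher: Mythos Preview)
Your high-level plan matches the paper's: Lemma~\ref{lem:colourchange} bookends the chain, with $\pi$-commute and spider fusion handling the interior, and the remaining three lines treated by the same template. Where you go astray is in the mechanics of the opening colour-change. Applied to the whole three-spider line $z_+\, x_+\, z_+$, Lemma~\ref{lem:colourchange} swaps \emph{all three} colours and negates all three phases in one shot, landing directly on $x_-\, z_-\, x_-$; it does not act only on the central spider, no Pauli correction is emitted at this stage, and no fusion is needed to restore three-spider form. The $\pi$-vertex you anticipate enters only at the \emph{second} step of the chain, where one of the $x_-$ spiders is deliberately split via the spider rule into $x$ and $x_+$ (using $3\pi/2 = \pi + \pi/2$), producing a four-vertex intermediate. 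That Pauli is then commuted past the neighbouring $z_-$ --- flipping it to $z_+$ --- and reabsorbed by fusion into the remaining $x_-$, yielding $x_+\, z_+\, x_+$; a second application of colour-change then gives $z_-\, x_-\, z_-$. So the delicate phase bookkeeping you rightly flag is genuine, but its source is an explicit spider split in the interior of the chain, not a by-product of the colour-change endpoints.
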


\begin{proof}[Alternative proof of Prop \ref{prop:c1-nf}]
  First observe that for any $t \in \Cliff(1,1)$ its image $T(t)$ has
  the form of a line graph.  Now,  by Lemma \ref{lem:lineNF}, $T(t)
  \eq t_1$ where  $t_1$ is in Pauli-standard form.  Now we 
  proceed by induction on $n$, the number of vertices of $t_1$.  

  If $n=0$ or $1$ then $t_1$ is already in \CC1.  If $n=2$ and $t_1$
  has no Pauli nodes, then $t_1$ is already in \CC1; otherwise $t_1
  \eq c$ for some $c$ in \CC1 by the $\pi$-commute rule.

  If $n=3$ and $t_1$ contains at least one Pauli vertex, then by
  applying the $\pi$ commute and spider rules, $t_1$ can rewrite to a
  term with only two vertices or fewer.  For example 
  \begin{equation}\label{eq:pauli-absord}
    \inlinetikzfig{cliffords/reduce-paulis-i}
    \rw
    \inlinetikzfig{cliffords/reduce-paulis-ii}
  \end{equation}
  Otherwise, none of the vertices are Paulis; there
  are 16 line graphs of this form; by Lemma \ref{lem:hadamard-equiv}
  these are partitioned in four equivalence classes. Each of the
  four equivalence classes contains an element of \CC1, hence $t_1 \eq
  c \in \CC1$.

  Finally, suppose $n \geq 4$.  If $t_1$ has any Pauli vertices, then
  $t_1 \eq t_2$ by (\ref{eq:pauli-absord}) where $t_2$ has strictly
  fewer Pauli vertices, and strictly fewer vertices overall.
  Otherwise, $t_1$ has no Pauli vertices; therefore any sequence of
  vertices of length three can be rewritten, using Lemma
  \ref{lem:hadamard-equiv}, to a sequence with the colours exchanged.
  Then by applying the spider and identity rules, and Lemma
  \ref{lem:lineNF} $t_1 \rw t_2$ where $t_2$ is in Pauli standard form
  and has strictly fewer vertices than $t_1$.  Hence we obtain the
  result by induction.
\end{proof}

\noindent
By enumerating all diagrams with fewer than four vertices, it's easy
to check that no $c\in\CC 1$ is equivalent to a smaller diagram, so
these diagrams are the minimal forms for \C 1.

\subsection{2-Qubit Cliffords}
\label{sec:2-qubit-cliffords}

\begin{figure}[t]

  \begin{center}
    \tikzfig{figures/cliffords/CC2-Id} \qquad
    \tikzfig{figures/cliffords/CC2-SWAP} \qquad
    \tikzfig{figures/cliffords/CC2-CNOT} \qquad
    \tikzfig{figures/cliffords/CC2-TONC}
  \end{center}
  where $C_1,C_2 \in \CC 1$, $A \in \mathfrak{A}$ and $B \in \mathfrak{B}$.
  \[
  \mathfrak{A} = \{
  \tikzfig{zx-definition/axioms/identity_2}\,,\, 
  \rpp\,,\,
  \inlinetikzfig{cliffords/CC1_2_rppgpp}\,
  \}
  \qquad
  \mathfrak{B} = \{
  \tikzfig{zx-definition/axioms/identity_2}\,,\,
  \gpp\,,\,
  \inlinetikzfig{cliffords/CC1_2_gpprpp} \,
  \}
  \]
  \caption{\CC 2 : standard  minimal forms for two-qubit Cliffords}
  \label{fig:cc2}
  \figureline
\end{figure}

Let \CC2 be the set of diagrams defined in Figure~\ref{fig:cc2}.
Notice that \CC2 has $24^2(1+1+9+9) = 11520$ elements.  As in the case
of \CC1, all the elements of \CC2 are evidently circuits, and it can
be mechanically checked that they are all distinct under the standard
interpretation.  Hence we adopt \CC2 as the normal forms for \C2.
Again, by the completeness of the axioms
(Thm~\ref{thm:sound-and-complete}) we can immediately conclude that
every 2-qubit Clifford circuit is equivalent to some $k \in \CC2$.
Zooming in a little closer we have the following:

\begin{proposition}\label{prop:CC2}
    Let $k\in \CC2$ and let $g$ be a generator of $\C2$ as a \zxcalculus
  term; then $g\circ k \eq k'$ for some $k' \in \CC2$.
\end{proposition}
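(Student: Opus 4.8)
The bare existence statement is immediate: by construction $\denote{\cdot}$ restricts to a bijection $\CC2 \to \C2$ (each element of $\CC2$ is a Clifford circuit, they are pairwise inequivalent, and $\sizeof{\CC2} = \sizeof{\C2} = 11520$), and $\denote{g\circ k} \in \C2$, so there is a unique $k' \in \CC2$ with $\denote{g\circ k} = \denote{k'}$, whence Theorem~\ref{thm:sound-and-complete} gives $g\circ k \eq k'$. This argument does not \emph{compute} $k'$, however, and the optimiser needs an effective rule that absorbs one generator at a time into a running normal form. So the substance is a constructive version, which I would organise by the generator $g$ and by the ``core shape'' of $k$.

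Write $k = (C_1\otimes C_2)\circ\kappa$ with $C_1,C_2\in\CC1$ and $\kappa$ one of the four cores of Figure~\ref{fig:cc2}: $\id{2}$, the swap $\sigma$, $\CNOT\circ(A\otimes B)$, or $\TONC\circ(A\otimes B)$ with $A\in\mathfrak{A}$, $B\in\mathfrak{B}$. If $g$ is a single-qubit generator on wire~$i$, then $g\circ k$ differs from $k$ only by replacing $C_i$ with $g\circ C_i$; by Proposition~\ref{prop:c1-nf} (equivalently, by Lemma~\ref{lem:lineNF} followed by the $\pi$-commute, spider and identity rules and Lemma~\ref{lem:hadamard-equiv}) $g\circ C_i\eq C_i'$ for some $C_i'\in\CC1$, so $g\circ k$ is again in $\CC2$ with the same core. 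If $g=\sigma$, slide $\sigma$ past $C_1\otimes C_2$, exchanging the wires, and then past $\kappa$ using $\sigma\circ\sigma=\id{2}$, $\sigma\circ\CNOT=\TONC\circ\sigma$ and $\sigma\circ\TONC=\CNOT\circ\sigma$; the trailing $\sigma$ then either interchanges the Id and SWAP cores or interchanges the CNOT and TONC cores, and one checks that $A,B$ land in the prescribed sets (the roles of the two wires for $\mathfrak{A}$ and $\mathfrak{B}$ being exchanged between the CNOT and TONC cores).

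The substantial case is $g=\CNOT$ (and, symmetrically, $g=\TONC$, by reflecting top-to-bottom, which is legitimate since the colour-swapped rules are admissible). Here the key fact is that $\CNOT$ normalises the Pauli group: putting $C_1,C_2$ in Pauli-standard form by Lemma~\ref{lem:lineNF} and pushing their Pauli prefixes through the core $\CNOT$ produces a fresh Pauli layer on the far side and leaves a term $\CNOT\circ(D_1\otimes D_2)\circ\kappa_0$ in which $D_1,D_2$ are \emph{Pauli-free} one-qubit Cliffords and $\kappa_0$ is $\id{2}$, $\sigma$, or the original CNOT/TONC core. For Id and SWAP cores this is $\CNOT\circ(D_1\otimes D_2)$ up to a trailing swap, and a finite check over the Pauli-free $D_i$ rewrites it into a $\CC2$ element; for CNOT and TONC cores one is left with $\CNOT\circ(D_1\otimes D_2)\circ\CNOT\circ(A\otimes B)$ and $\CNOT\circ(D_1\otimes D_2)\circ\TONC\circ(A\otimes B)$ — two two-qubit gates with a single-qubit layer sandwiched between — which are reduced by explicit rewrite sequences built from the bialgebra, copying, $\pi$-commute, Hopf and $H$-commute rules together with Lemmas~\ref{lem:hadamard-equiv} and~\ref{lem:colourchange} (when $D_1=D_2=\id{}$ the two gates simply annihilate, $\CNOT\circ\CNOT\eq\id{2}$; the remaining combinations go one family at a time). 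Finally the accumulated Pauli data and the remaining single-qubit rotations are collected onto the two wires and renormalised via Proposition~\ref{prop:c1-nf} to produce a $\CC2$ form.

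I expect the last step — the finite analysis of $\CNOT\circ(D_1\otimes D_2)\circ\CNOT\circ(A\otimes B)$ and its TONC analogue — to be the main obstacle. There is no single uniform rule; the rewriting must preserve the global ``is a circuit'' structure, so the graph rules cannot be applied freely; and one has to verify that every combination of $(D_1,D_2,A,B)$ really does funnel back into one of the four core shapes with $A',B'$ in the prescribed sets. This is the bookkeeping that the explicit diagram chains carry out; the remaining cases are mechanical.
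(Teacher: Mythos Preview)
Your non-constructive opening is correct and parallels what the paper does for the one-qubit case. The constructive argument is also sound in outline, but you have the orientation of $\CC2$ reversed relative to the paper: in Figure~\ref{fig:cc2} the free $C_1,C_2\in\CC1$ sit on the \emph{input} side and the restricted $A\in\mathfrak{A},\,B\in\mathfrak{B}$ on the \emph{output} side, so that under post-composition by $g$ the $C_i$ genuinely ``play no part''. With the paper's layout, a single-qubit generator $u$ is \emph{not} absorbed for free: one must push $uA$ (or $uB$) back through the $\CNOT$, extracting a residue in $\mathfrak{A}$ (resp.\ $\mathfrak{B}$) and dumping the remainder onto the $C_i$. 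In exchange, when $g=\CNOT$ only the nine terms $\CNOT\circ(A\otimes B)\circ\CNOT$ with $A\in\mathfrak{A},\,B\in\mathfrak{B}$ need to be reduced; the paper spells out two of these (the key ones use the bialgebra law), and the $\TONC$ and $\sigma$ cores are folded into the $\CNOT$ case via a one-line topological identity and the Hadamard decomposition of $\TONC$.

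Your orientation trades these: the single-qubit case becomes trivial, but the $\CNOT$ case now has the full $C_1\otimes C_2$ trapped between the two $\CNOT$s instead of the nine-element $A\otimes B$. Your Pauli-pushing via Lemma~\ref{lem:lineNF} is a valid way to tame this, but the ``Pauli-free'' residues $D_1,D_2$ still range over a much larger set than $\mathfrak{A}\times\mathfrak{B}$, and after the finite check you must re-normalise the debris that has landed on the input side (where, in your convention, the $A,B$ live and are not closed under multiplication by Paulis). Either scheme goes through; the paper's buys a nine-case split at the cost of a mildly non-trivial single-qubit step, whereas yours front-loads the easy cases and leaves a heavier $\CNOT$ analysis. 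One minor point: $\TONC$ is not among the generators of $\C2$, so you need not treat it as a separate value of $g$.
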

\begin{proof}
  Notice that the 1-qubit Cliffords $C_1$ and $C_2$ which occur in $k$
  play no part in the computation, hence we ignore them.  
  
  There are effectively four choices for $g$ : $(u\otimes \id{})$,
  $(\id{}\otimes u)$, \CNOT, or  $\sigma$, where $u$ ranges over the
  generators of \CC1.  If $g = \sigma$, the result is trivial.

Further, for
  any $u \in \CC1$, we have $(u \otimes \id{})\circ \CNOT = (a \otimes
  \id{})\circ\CNOT\circ(u_1 \otimes u_2)$ for some $u_1,u_2 \in \CC1$
  and $a\in\mathfrak{A}$, either by commuting vertices of the same
  colour, or taking out factors of Paulis.  Similarly $(\id{} \otimes
  u)\circ \CNOT = (\id{} \otimes b)\circ\CNOT\circ(u_1 \otimes u_2)$
  for some $b\in\mathfrak{B}$.  Hence the only non-trival cases to
  consider are:
\begin{enumerate}
\item $g = \CNOT$ and $k = (a\otimes b)\circ \CNOT$.  There are nine
  possibilities for this case, of which we consider two here. If $a =
  x_+$ and $b = \id{}$, then by using the bialgebra rule we reduce as
  shown in (i) below.  If  $a = x_+$ and $b = z_+$, then a similar
  argument yields (ii).
\[
\text{(i)} \qquad
\CNOTCNOTsCaseAi \quad\eq\quad  \CNOTCNOTsCaseAx
 \qquad\quad \text{(ii)} \quad 
\CNOTCNOTsCaseBi \quad\eq\quad  \CNOTCNOTsCaseBix
\]
They remaining cases can all be derived from these two, except the
  case where $a = b = \id{}$, which is trivial.
\item $g = \CNOT$ and $k = (b\otimes a) \circ \sigma \circ \CNOT$.  In
  this case we have (iii) below.  However, by exploiting the $H$ rules
  we have the equivalence (iv) so this case can be reduced to the
  previous one.
  \[
  \text{(iii)}\qquad\CNOTswapLHS \eq \CNOTswapRHS
  \qquad\quad \text{(iv)} \quad 
  \tikzfig{zx-definition/TONC}  \quad \eq \quad \toncNF
  \]
\item $g = \CNOT$ and $k = \sigma$.  Again we can apply (iv).
\end{enumerate}
\end{proof}
As in the single qubit case these reductions will provide the basis
for the rewrite rules of our optimiser.

\section{Quantomatic}
\label{sec:quantomatic}

We briefly sketch the Quantomatic system; for a full description see
\cite{Kissinger2015Quantomatic:-A-}; to obtain it, see
\cite{quantomatichome}.

Quantomatic is an interactive theorem prover 
which can prove equations between terms of 
the \zxcalculus.  
The user draws the desired term in the graphical editor, and builds
the proof by applying \emph{rewrite rules} to the current graph.  A
rewrite rule is a \emph{directed} equation, \ie a pair of graphs which have
the same boundary.  Rules are applied in two-step process.  First the
system searches for subgraphs of the term which are isomorphic to the
LHS of the rule; each such subgraph is called a \emph{match}.
Quantomatic displays all the possible matches of the chosen rule in
the current term, and allows the user to select where the rule is to
be applied.  The matched subgraph is then replaced by the RHS of the
rule to produce a new term.  A proof therefore consists of a sequence
of terms linked by the application of a particular rule at a
particular location in the term.

Quantomatic allows the user to define automated proof tactics, known
as simplification procedures or \emph{simprocs}.  The name is slightly
misleading since there is no need for a simproc to actually
\emph{simplify} the graph, in the sense of reducing the number of
vertices or edges; however failure to do so will typically
result in non-termination.  Simprocs are written in the Python
programming language, interpreted by the built-in Jython interpreter
\cite{jython}, and augmented with various tactic
combinators.  The most useful of these are:
\begin{itemize}
\item REWRITE($r$) : given a rule $r$, apply the rewrite to the first
  match found.
\item REWRITE\_METRIC($r$, $m$) : given a rule $r$ and a \emph{metric}
  function $m : \ZX \to \mathbb{N}$ apply the rewrite to the first
  match found which \emph{reduces} the metric.
\item REWRITE\_TARGETED($r$, $v$, $t$) ; given a rule $r$, a vertex
  $v$ in the LHS of the rule, and a \emph{targeting} function $t : \ZX \to
  \mathrm{Vert}$ apply the rewrite $r$ to the first match where
  vertex $v$ of the rule is matched to the vertex $t(G)$ in the term.
\end{itemize}
All of these combinators can also accept a list of rules, in which
case the
rewrites are attempted in the order of the list.  Simprocs can be
combined in sequence, and also using the combinator REDUCE($s$) which
repeats the simproc $s$ until no new rewrites can be performed.

An important point is that when multiple matches are obtained, the
system will select one without user intervention.  Which rewrite is
selected depends on the internal representation of the term, and is
effectively non-deterministic.  

The axioms of the \zxcalculus, considered as a rewrite system, are
neither terminating nor confluent: for example, if $G \to G'$ via the
$\pi$-commute rule then the rule can be applied again immediately to
rewrite $G' \to G$.  This and similar cases can easily lead to a
non-terminating rewrite strategy.  Further although many of the axioms
can be oriented in a direction which reduces the graph complexity, the
resulting rewrite system is no longer complete with respect to the
standard interpretation: in some cases it is necessary to increase the
graph size to derive a desired equation.

\section{Optimising Clifford Circuits with Quantomatic}
\label{sec:optimising}

In this section we describe our optimisation procedure and summarise
the results.  The reasons for our choices are discussed in the
following section.

By \emph{circuit} we mean a \zxcalculus term which is the image of the
translation functor $T : \Cliff \to \ZX$ defined in Section
\ref{sec:form-cliff-circ}.  Note that since \Cliff is a \dag-PROP and
$\denote{\cdot}$ is a strict monoidal functor, the circuits form a
sub-PROP of \ZX: they are closed under composition and tensor product;
however they are not closed with respect to the equivalence relation $\eq$.

Most \zxcalculus terms are \emph{not} circuits: some of these don't
correspond to Clifford unitaries, and so do not concern us, however
there are many non-circuits which are equivalent to circuits via
rewriting.  We don't attempt to find circuit forms for such terms.
Rather, knowing that the input is a circuit, we use rewrites which
replace circuits with circuits, and thus stay at all times within the
class of circuits.

The axioms of the \zxcalculus calculus are mostly equations between
graphs which are evidently not circuits.  Therefore our first step is
to derive new equations between circuits which will serve as rewrites
for use in the main simproc.  These rules are listed in Appendix
\ref{sec:main-rules}.  

While the derived rules are ``obviously'' equations between circuits,
the circuit structure of a larger graph may not be apparent by looking
at a local subgraph.  It's necessary to look at the entire graph in order
to guide the rewrite engine.





\begin{definition}\label{def:simple-diagram}
  A \zxcalculus diagram is called \emph{simple} if it satisfies
  the following criteria:
  \begin{itemize}
  \item It is a simple graph.
  \item No vertex is adjacent to a vertex of the same colour.
  \item Every degree 2 vertex has a non-zero label.
  \end{itemize}  
\end{definition}
\noindent 

\begin{lemma}\label{lem:simple-diagrams}
  Every \zxcalculus diagram is equal to a simple one.
  \begin{proof}
    Starting with an arbitrary diagram $d$, first replace all $H$
    boxes using the H-euler rule.  Then apply the spider rule to fuse
    spiders, the hopf rule to remove parallel edges, the anti-loop
    rule to remove any self loops, and the identity to remove trivial
    spiders everywhere.
  \end{proof}
\end{lemma}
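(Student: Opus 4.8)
The plan is to exhibit a terminating rewrite strategy, using only the structural axioms of Figure~\ref{fig:zxrules}, whose fixed points are exactly the simple diagrams; since each such rewrite preserves $\eq$, the diagram reached from $d$ will be equivalent to $d$, as the statement requires. First I would remove the $H$ vertices by applying the $H$-euler rule once to each of them: there are only finitely many, so this halts, producing $d_0 \eq d$ built from $Z$- and $X$-spiders only. Then I would repeatedly apply, in any order and wherever they match, four rules: the spider rule (fuse any two adjacent same-colour spiders into one, any edges formerly between them becoming self-loops); the hopf rule (delete a pair of parallel edges between opposite-colour spiders); the anti-loop rule (delete a self-loop); and the identity rule together with its colour-swapped variant, admissible by $H$-commute (delete a degree-$2$ spider labelled $0$).

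The hard part is termination: the identity rule can create fresh same-colour adjacencies, spider fusion can create fresh self-loops and parallel edges, and so forth, so an infinite cascade must be ruled out. The remedy is a monovariant: set $\mu(g) = \sizeof{V_g} + \sizeof{E_g}$, the number of vertices plus edges of $g$. Spider fusion removes a vertex (and never adds edges), hopf removes two edges, anti-loop removes an edge, and the identity rule removes a vertex and an edge, so $\mu$ strictly decreases at every step; being a non-negative integer it cannot decrease indefinitely, so the procedure halts at some diagram $s \eq d_0 \eq d$. This is precisely why the one-off, $\mu$-increasing $H$-euler phase is performed first and in isolation.

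It remains to check that the fixed point $s$ is simple. Since no spider fusion applies, no edge joins two distinct same-colour spiders, and in particular no parallel edges occur between same-coloured vertices; since no anti-loop step applies, $s$ has no self-loops; and since no hopf step applies, no parallel edges join opposite-coloured spiders. Hence $s$ is a simple graph in which no vertex is adjacent to one of the same colour. Finally, since neither the identity rule nor its colour-swap applies, every degree-$2$ vertex of $s$ carries a non-zero label. So all three conditions of Definition~\ref{def:simple-diagram} hold, and $d \eq s$.
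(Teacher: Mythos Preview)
Your proof is correct and follows exactly the same strategy as the paper's: eliminate $H$ vertices via $H$-euler, then repeatedly apply spider, hopf, anti-loop, and identity until none match. You simply add the termination argument (via the monovariant $\mu = \sizeof{V}+\sizeof{E}$) and the explicit check that a fixed point satisfies Definition~\ref{def:simple-diagram}, both of which the paper leaves implicit.
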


\begin{definition}{\cite{Danos2006Determinism-in-}}\label{def:causal-flow}
Given a diagram $d$, with vertices $V$, inputs $I$, and outputs $O$, a
\emph{causal flow} $(f,\prec)$ on $d$  consists of a
function $f: I+V \to V+O$ and a partial order $\prec$ on the set
$I+V+O$ satisfying the following properties:
\begin{enumerate}[label=(F\arabic*)]
\item $f(v) \sim v$ \label{flowi}
\item $v \prec f(v)$ \label{flowii}
\item If $u \sim f(v)$ then $v \prec u$ \label{flowiii}
\end{enumerate}
where $v \sim u$ means that $u$ and $v$ are adjacent in the graph.
\end{definition}

In the one-way model \cite{Raussendorf-2001}, the existence of a
causal flow on a graph implies that measurement-based quantum
computation on the corresponding graph state is deterministic
\cite{Danos2006Determinism-in-}.  It plays a similar role for
\zxdiagrams.

\begin{definition}\label{def:circuit-like}
  A simple \zxdiagram is called \emph{circuit-like} if has
  the same number of inputs and outputs and its underlying graph
  admits a \emph{causal flow}.
\end{definition}

\begin{proposition} \label{prop:circuit-like-is-circuit}
  Every circuit-like diagram corresponds to a quantum circuit.
  \begin{proof}
    Using the given flow $(f, \prec)$ we will translate a circuit-like
    diagram $d$ into a circuit $c$ such that $c$ is obvious in the
    image to of $T$.

    We must first determine which edges in the diagram correspond to
    qubits and which correspond to interactions between qubits.  Note
    that for each input $i$, the flow function $f$ defines a path
    through the diagram $i \to f(i) \to f(f(i)) \to \cdots$.  By
    \ref{flowii} and the fact that $\prec$ is a partial order, such a
    path must terminate at an output; and by \ref{flowiii} each such
    input-to-output path is disjoint.  Further, since the numbers of inputs
    and outputs are the same then every vertex of the diagram must
    occur on one such path.

    Now we translate the vertices into gates tracing along the qubit
    paths.  If the vertex has degree 2 it immediately yields a
    $Z_\alpha$ or an $X_\beta$ gate.  If vertex $v$ has higher degree,
    let its neighbours, excluding its immediate predecessor and
    successor, be $u_1, \ldots, u_n$.  We decompose $v$ into $v_0,
    v_1, \ldots v_n$ as shown below:
    \ctikzfig{spider-decompose-Z}
    Note that the new vertices must be compatible with the original
    ordering, in the sense that if $u_i \sim v \sim u_j$ where $u_i
    \prec u_j$ then we require that $v_i \prec v_j$.  By the flow
    conditions, none of the $u_i$ belong to the same qubit as $v$, and
    by the simplicity of the diagram all the $u_i$ are the opposite
    colour to $v$.  Note that the updated diagram also admits a causal
    flow, obtained from the original one in the obvious manner.  Hence
    all the higher degree vertices may be decomposed in the same way,
    to leave the diagram with no vertex of higher degree than three,
    and non-zero phases only in the degree two vertices.  Then all the
    links between qubits maybe replaced by \CNOT gates; since $\prec$
    defines a partial order on the vertices, we are guaranteed this
    is well-defined.
  \end{proof}
\end{proposition}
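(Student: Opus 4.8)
The plan is to read the causal flow $(f,\prec)$ as a blueprint that first carves the diagram into qubit wires and then lets us read off gates along those wires. The first step is to show that $f$ organises the vertex set into disjoint input-to-output paths. For each input $i$, iterating $f$ yields a sequence $i, f(i), f(f(i)), \dots$; by condition (F2) each step strictly increases $\prec$, so since $\prec$ is a partial order on a finite set the sequence cannot repeat, and because $f$ is total on $I+V$ it can only terminate at an output. Condition (F3) then forces distinct such paths to be vertex-disjoint: if $w = f(v) = f(v')$ with $v \ne v'$, then since $v' \sim f(v)$ and $v \sim f(v')$, applying (F3) twice gives $v \prec v'$ and $v' \prec v$, contradicting antisymmetry. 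As the diagram has equally many inputs and outputs, a counting argument shows that every interior vertex lies on exactly one such path; these are the qubit wires.

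The second step is to translate vertices to gates, wire by wire. A degree-2 vertex has exactly a wire-predecessor edge and a wire-successor edge, so by Definition~\ref{def:zxsemantics} it is a $Z_\alpha$ or $X_\beta$ gate (and since the diagram is simple its label is non-zero, though zero phases would be harmless). For a vertex $v$ of degree $k+2>2$, let $u_1,\dots,u_k$ be its neighbours other than its wire-predecessor and wire-successor; the flow conditions guarantee none of the $u_j$ lies on $v$'s own wire, and simplicity guarantees each $u_j$ has the opposite colour to $v$. Using the spider rule I would split $v$ into a chain $v_0 - v_1 - \cdots - v_k$ of vertices of the same colour as $v$, routing the edge to $u_j$ out of $v_j$, and choosing the order along the wire so that $u_i \prec u_j$ implies $v_i$ precedes $v_j$. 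The essential bookkeeping point is that the enlarged diagram still carries a causal flow (extend $f$ and $\prec$ along each new chain in the obvious way), so this decomposition can be iterated until every vertex has degree at most three and all non-trivial phases sit on degree-2 vertices.

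The final step is to replace inter-wire structure by \CNOT gates: each remaining edge between two wires joins a degree-3 vertex on one wire to a degree-3 vertex on the other, and that pair-plus-edge is exactly $T(\CNOT)$ acting between those two wires (up to colour, handled as in Lemma~\ref{lem:colourchange} / the $H$-commute rule). Because $\prec$ restricts to a partial order on the (decomposed) vertex set, the vertices along all the wires can be topologically sorted into a consistent sequence of layers, and reading this sequence off wire by wire produces a genuine circuit $c \in \Cliff$ with $T(c)$ equal to $d$. I expect the main obstacle to be precisely this last well-definedness claim — arguing that the flow order genuinely rules out ``causal cycles'' among the \CNOT gates, so that a globally consistent gate ordering exists — together with the care needed in the decomposition step to keep the causal flow (and hence that order) intact as higher-degree vertices are split.
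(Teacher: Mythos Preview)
Your proposal is correct and follows essentially the same approach as the paper's own proof: use $f$ to carve the diagram into disjoint input-to-output qubit wires, decompose high-degree spiders along each wire via the spider rule while maintaining the causal flow, and then read off single-qubit gates and \CNOT{}s with the partial order $\prec$ guaranteeing a globally consistent gate ordering. You supply a bit more detail than the paper does (the explicit injectivity argument for $f$, the topological-sort justification, and the remark about colour/orientation of the \CNOT{}s), but the structure and key ideas are the same.
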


Intuitively the flow path cover specifies which edges of the
\zxdiagram are the ``qubits'' of the circuit; those edges not in the
path cover define a \CNOT acting on two qubits.  Assuming that the
graph is a circuit, the path cover is relatively easy to compute by
starting at each input and finding a successor among its neighbours.
If there is ever a choice of successor for a given vertex this choice
can be resolved by looking at the other wires.  This information is
needed frequently by our optimisation procedure.  If an unambiguous
path cover cannot be constructed then the procedure halts with an
error -- in this case the input graph was not a circuit.

Our optimiser can be summarised as the repeated alternation of two
phases:
\begin{itemize}
\item Simplification: apply everywhere possible a strictly reducing
  set of rules.
\item Commutation: using targeting and metric rewrites, apply
  reversible rules to (i) move Pauli operators closer to the inputs.
(ii) move \CNOT and \TONC gates which act on the same two qubits
    closed together.
\end{itemize}
In order to reduce the number of rules, there is an initial phase
which replaces all Hadamard, $Z(-\frac{\pi}{2})$ and
$X(-\frac{\pi}{2})$ gates with $S$, $V$ and Paulis as appropriate.
The single qubit gates are then ``tidied up'' by a final pass.

The simplification phase is straightforward, exploiting the different
cases of \ref{prop:CC2} as the key rules.  The commutation phase is
more delicate, and makes heavy use of the path cover computation.

For the single-qubit commutation rules we use targeting function
which traverses each path starting from the input and applies the rule
at the first possible position in the graph.  This works very well
and, in concert with the simplification rules, is guaranteed to reduce
all single qubit Cliffords to one of the forms of \CC1.  However this
approach fails for rules where a Pauli must commute with a \CNOT gate.
The REWRITE\_TARGETED method allows the user to identify a single
vertex in the rewrite rule with a single vertex in the graph to be
matched.  However this is rarely sufficient to uniquely determine the
overall match.  For example, the following rule
\ctikzfig{derived-rules/GreenPiCx}
can be applied in two different ways to the central diagram below:
\[
\tikzfig{examples/CNOT-Pauli-1-good}
\qquad \leftarrow \qquad 
\tikzfig{examples/CNOT-Pauli-1}
\qquad  \rightarrow\qquad 
\tikzfig{examples/CNOT-Pauli-1-bad}
\]
Note that the righthand rewrite does not produce a circuit.  The current
implementation provides no direct way for a simproc to specify that
one matching should be preferred to another, even when it has
the necessary information.  For this reason, we have defined a metric
which favours graphs where the Paulis are nearer to the inputs, and
maximally penalises graphs with vertices which are not in any
path.

The approach sketched here is very conservative: the optimiser only
performs rewrites which result in diagrams which meet the (also very
conservative) definition of circuit, even in cases where that diagram
could be later rewritten to circuit.

\section{Summary of results and discussion}
\label{sec:summ-results-disc}

Since we have formalised circuits in the \zxcalculus, our measure of
size is the number of non-trivial vertices in the graph.  This
quantity is the gate count over the generating set $S$, $Z$, $V$, $X$,
and \CNOT, although each \CNOT contributes 2 to the measure.
A less obvious consequence of the \zxcalculus formalism is that the
swap gate is not represented at all, and \CNOT gates may act on any
qubits without penalty.  This may or may not be realistic for a
given quantum architecture, but it is a fundamental feature of the
\zxcalculus, and removing it would require a string diagrammatic
formalism that takes into account the planarity of diagrams.  This
would take us far beyond the scope of this work.

Our optimiser reduces all one- and two-qubit circuits to elements
of \CC1 and \CC2 respectively; however in larger circuits it does not
always find a minimal form.  Notably our ruleset does not
include any reductions for circuits of three or more qubits, so there
are relatively simple \CNOT circuits which cannot be reduced.  
Further, due to the extreme conservativity of our metric and
targeting procedures, the optimiser does not always find valid
rewrites among the ruleset.  
However, on a test
set of randomly generated Clifford circuits it typically produces
significant reductions of circuit size; see Table \ref{tab:results}.

\begin{table}
  \centering
  \begin{tabular}{c|c|c|c|c|c|c|c}
    width & depth & input size & output size 
    & size reduction & proof steps & time $\pm \sigma$ \\\hline
     1 & 20 & 10.8 & 2.4 & 0.26 & 19.6 & 7 $\pm$ 3\\
     2 & 20 & 30.2 & 9.0 & 0.30 & 78.9 & 1021 $\pm$ 680\\
     3 & 20 & 44.2 & 14.7& 0.32 & 148  & 3144 $\pm$ 2189\\
     5 & 20 & 78.5 & 22.6 & 0.28 & 239 & 6679 $\pm$ 3091\\\hline
  \end{tabular}
  \caption{Summary of results on randomly generated circuits.  Random
    circuits were generated with the program \texttt{randomCliffs.py}
    which is available from the project repository.  Tests were
    performed on a basic desktop PC.}
  \label{tab:results}
\end{table}

There is striking increase in compute time per proof step as the width
of the circuit increases.  This can be attributed to the increased
number of \CNOT gates: as the average vertex degree increases, the
number of candidate matches including a given vertex increases
geometrically.  Since many of these matches will be invalid as circuit
rewrites, a lot of time is spent searching for matches which are then
rejected.  Further, wider circuits require two-qubit
commutation rules, which are metric-driven rather than targeted, and
hence impose a much greater computational load.

Indeed, the main difficulty we have faced is taking control of the
matching engine.  Targeted rewriting can select good rewrite positions
with high efficiency but does not discriminate between good and bad
rewrites at that position.  Metric-driven rewrites can perform such
discrimination but at the cost of losing control over the search.
Combining these features in a single combinator would greatly ease
this difficulty.

Further, the targeting and metric functors are required to be
stateless; this imposes another avoidable performance cost, as the
path cover of the graph must be recomputed on every invocation,
regardless of whether a rewrite was performed.  Given the poor
performance of interpreted Python compared to the Scala core,
significant speed-up could be achieved by providing a range of
built-in metrics for use in simprocs.

\section{Conclusions and Future Work}
\label{sec:future-work}

Since Quantomatic was designed for interactive use, its simproc
environment is currently quite limited.  In repurposing the system to
operate as a fully automated circuit optimisation tool, we have run up
against its current limits, and our frequent resort to brute force and
perversity is a reflection of these limitations,
rather than the authors' mental states\footnote{However the first
  author's mental state did take a beating from the lack of debugging
  facilities.}.  These limitations are not fundamental, and
could be removed by minor extensions of the simproc API.  Although
there is much scope for improvement in our optimiser, we have
demonstrated that the Quantomatic system can serve as the basis for an
effective circuit optimisation tool.





\small
\bibliography{all}

\newpage
\normalsize



\appendix

\section{Electronic Resources}
\label{sec:electronic-resources}

All the work described in this paper is available as a downloadable
quantomatic project from the following URL.
\begin{center}
  \projectrepo
\end{center}
Please download it and play around!

\section{An aside on CNOT and SWAP}
\label{sec:an-aside-cnot}

We have already seen the
\zxcalculus representation of $\CNOT$ in
Section~\ref{sec:form-cliff-circ}.  Note 
that the upper ($z$) vertex corresponds to the control qubit, while the
lower ($x$)  is the target qubit.  We could also consider the $\CNOT$
with control and target reversed, which we denote $\TONC$.  
\[
\TONC \quad := \quad 
\toncSWAP \quad = \quad 
\tikzfig{zx-definition/TONC}
\]
It is a well known fact that the swap itself can be represented as
$\CNOT\circ\TONC\circ\CNOT$, and indeed this is provable in the
\zxcalculus.
\begin{equation}\label{eq:swap-as-tonc}
  \cnottonccnotI \eq
\cnottonccnotII \eq
\cnottonccnotIV \eq 
\cnottonccnotV \eq
\SWAP
\end{equation}
Note that the bialgebra rule gives the first equation, and the Hopf
rule the second: we cannot rely purely on the group structure.

Alternatively, by the colour
duality principle we can express $\TONC$ in terms of $\CNOT$ and $H$: 
\begin{equation}\label{eq:tonc-nf}
\tikzfig{zx-definition/TONC} \quad  \eq \quad \toncHHHH \quad \eq \quad \toncNF
\end{equation}
Combining (\ref{eq:tonc-nf}) and (\ref{eq:swap-as-tonc}) we obtain:
\begin{lemma}\label{lem:swap-as-cnot}
  \[
  \SWAP \qquad \eq \qquad \swapNF
  \]
\end{lemma}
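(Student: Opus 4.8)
The plan is to obtain the statement as an immediate consequence of the two displayed equivalences that precede it. Equation~(\ref{eq:swap-as-tonc}) exhibits $\SWAP$ as $\CNOT\circ\TONC\circ\CNOT$, \ie a word of three two-qubit gates with nothing between them, while equation~(\ref{eq:tonc-nf}) rewrites the central $\TONC$ to \toncNF, which is a single $\CNOT$ carrying a fixed pair of one-qubit Clifford vertices on each of its four legs. Substituting the right-hand side of~(\ref{eq:tonc-nf}) for the $\TONC$ occurring in~(\ref{eq:swap-as-tonc}) therefore yields a diagram built from three $\CNOT$ gates with one-qubit Clifford gadgets sitting at the two interfaces between the central $\CNOT$ and the outer ones; call this diagram $d$. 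Since $\eq$ is a congruence for the PROP operations, $\SWAP\eq d$.

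What remains is to identify $d$ with \swapNF. At each leg where a one-qubit gadget coming from \toncNF abuts the boundary of an outer $\CNOT$, the two adjacent vertices may have the same colour; where they do, the spider rule fuses them and the identity rule deletes any zero-labelled vertex that results, exactly as in the proof of Lemma~\ref{lem:simple-diagrams}. Carrying out these fusions and reading off the colours and phases of the surviving vertices recovers precisely the gate sequence drawn in \swapNF. All the rewrites used --- the two cited equivalences together with the spider and identity rules --- preserve $\eq$, so this completes the argument.

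The step I expect to demand the most care is this final matching. One must keep track of the left--right orientation with which the \toncNF normal form is inserted (the flanking $\CNOT$s in~(\ref{eq:swap-as-tonc}) are genuine $\CNOT$s, so the control and target roles are not symmetric), determine at which interfaces a same-colour adjacency actually occurs, and verify that after the fusions the phases combine to the labels shown in \swapNF. This is a finite, mechanical check --- indeed it is just the kind of bookkeeping that Quantomatic automates --- but it is the obvious place for a colour or phase error to slip in, and it is the only part of the proof that is not entirely routine.
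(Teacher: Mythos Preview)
Your approach is exactly the paper's: the proof in the text is the single line ``Combining~(\ref{eq:tonc-nf}) and~(\ref{eq:swap-as-tonc}) we obtain'', and you have correctly unpacked what that combination means.

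One small correction to your description of the final matching step. If you literally substitute \toncNF for the middle gate in $\CNOT\circ\TONC\circ\CNOT$ and then only apply spider and identity, you do \emph{not} land on \swapNF as drawn: the outer \CNOT vertices fuse with the adjacent same-colour phase vertices from \toncNF and acquire $\pm\pi/2$ phases, whereas in \swapNF all three \CNOT vertices are unphased and the single-qubit gadgets sit in different positions (in particular there is a trailing $(z_-,x)\otimes(x_-,z)$ block after the last \CNOT). So a little more than spider/identity is needed --- one must un-fuse those phases and commute them past the outer \CNOT{}s (or equivalently invoke the single-qubit commutation facts used elsewhere in the paper). You already flagged this as the place ``for a colour or phase error to slip in'', so the caveat is appropriate; just be aware that the bookkeeping is slightly more than the fuse-and-read-off you describe.
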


Lemma~\ref{lem:swap-as-cnot} indicates that we can dispense with the
swap as well as \TONC and work purely with \CNOT.  Since we have
formalised the system as a PROP it is more convenient to treat
$\sigma$ as one of the generators of \C2.

\newpage
\section{Proofs omitted from main article}
\label{sec:proofs-omitted-from}

\paragraph*{{Lemma \ref{lem:colourchange}}}
  The following rule is admissible:
  \[
  \inline{\tikzfig{gen-colour-change-lhs}} \eq
  \inlinetikzfig{gen-colour-change-rhs}  
  \]
where $n$ is the number of red vertices in the left hand side diagram.

\begin{proof}
  Consider a single red spider.  By the H-rules, and the spider we may
  reason as follows: 
  \begin{align*}
    \tikzfig{gen-colour-change-proof-i} 
    & \quad\rw\quad
    \tikzfig{gen-colour-change-proof-ii} 
    \quad\rw\quad
    \tikzfig{gen-colour-change-proof-iii} \\ \\
    & \quad\rw\quad
    \tikzfig{gen-colour-change-proof-iv} 
    \quad\rw\quad
    \tikzfig{gen-colour-change-proof-v}
  \end{align*}
  From which the result follows by applying $z_-$ to every leg.
\end{proof}

\paragraph*{{Proposition \ref{prop:CC2}}}
    Let $k\in \CC2$ and let $g$ be a generator of $\C2$ as a \zxcalculus
  term; then $g\circ k \eq k'$ for some $k' \in \CC2$.

  Here we fill in the missing details behind the equations (i) -- (iv)
  that appeared in the proof.

\vstrut{1ex}

\noindent
\emph{Equation (i):}
\[
\CNOTCNOTsCaseAi \quad\eq\quad  \CNOTCNOTsCaseAx
\]
\begin{proof}
\[
\begin{array}{ccccccc}
  \CNOTCNOTsCaseAi 
   & \eq &  \CNOTCNOTsCaseAii 
   & \eq &  \CNOTCNOTsCaseAiii 
   & \eq &  \CNOTCNOTsCaseAiv 
  \\ 
  &\qquad&&\qquad&&\qquad&
  \\
   & \eq &  \CNOTCNOTsCaseAv
   & \eq &  \CNOTCNOTsCaseAvi 
   & \eq &  \CNOTCNOTsCaseAvii
  \\
  \\
   & \eq &  \CNOTCNOTsCaseAviii
   & \eq &  \CNOTCNOTsCaseAix 
   & \eq &  \CNOTCNOTsCaseAx \in \CC2
\end{array}
\]
\end{proof}

\vstrut{1ex}

\noindent
\emph{Equation (ii):}
\[
\CNOTCNOTsCaseBi \quad\eq\quad  \CNOTCNOTsCaseBix
\]
\begin{proof}
\[
\begin{array}{ccccc}
  \CNOTCNOTsCaseBi 
   & \eq &  \CNOTCNOTsCaseBii 
   & \eq &  \CNOTCNOTsCaseBiii 
  \\ 
  &\qquad&&\qquad&
  \\
   & \eq &  \CNOTCNOTsCaseBiv 
   & \eq &  \CNOTCNOTsCaseBv
  \\ 
  \\
   & \eq &  \CNOTCNOTsCaseBvi 
   & \eq &  \CNOTCNOTsCaseBvii
  \\
  \\
   & \eq &  \CNOTCNOTsCaseBviii
   & \eq &  \CNOTCNOTsCaseBix \in \CC2
\end{array}
\]
Note that  we have used both Equations  (\ref{eq:swap-as-tonc}) and
(\ref{eq:tonc-nf}) in the heart of this proof.
\end{proof}

\vstrut{1ex}

\noindent
\emph{Equation (iii):}
\[
\qquad\CNOTswapLHS \eq \CNOTswapRHS
\]
\begin{proof}
  This is a straightforward application of the ``topology'' principle;
  no rules are required, they are equal in the sense of
  Definition~\ref{def:zxterms}.
\end{proof}

\vstrut{1ex}

\noindent
\emph{Equation (iv):}
\[
  \tikzfig{zx-definition/TONC}  \quad \eq \quad \toncNF
\]
\begin{proof}
This is simply Equation (\ref{eq:tonc-nf}) from the preceding section.
\end{proof}





\newpage

\section{Main Rules}
\label{sec:main-rules}

The following derived rules are the core of the circuit
transformations in the main proof development.  The rules themselves are
found in \texttt{derived-rules/} in the Quantomatic project, and their
derivations are found in \texttt{derived-rules-proofs/}.


In all diagrams inputs are to the left, outputs to the right.

\subsection{Init}
\label{sec:init}

These rules remove all minus and H nodes from the graph.

\ctikzfig{derived-rules/GreenMinus}

\ctikzfig{derived-rules/RedMinus}

\ctikzfig{derived-rules/AlwaysH}

\subsection{Always Rules}
\label{sec:always-rules}

These are strictly reducing so can always be applied.

\ctikzfig{derived-rules/GreenPi}

\ctikzfig{derived-rules/GreenPi2}

\ctikzfig{derived-rules/GreenPlus}

\ctikzfig{derived-rules/RedPi}

\ctikzfig{derived-rules/RedPi2}

\ctikzfig{derived-rules/Euler}

\ctikzfig{derived-rules/RedPlus}

\ctikzfig{derived-rules/Cx}

\ctikzfig{derived-rules/CxSw}

\ctikzfig{derived-rules/C22Plus1Bit}

\ctikzfig{derived-rules/C22Plus2Bit}

\ctikzfig{derived-rules/C2GreenPlus1Bit}

\ctikzfig{derived-rules/C2RedPlus1Bit}

\ctikzfig{derived-rules/C2Plus2Bit}

\subsection{EulerProc}

\ctikzfig{derived-rules/H}

The inverse is also used.

\subsection{Pauli Commutation rules}
\label{sec:pauli-comm-rules}

Each of these is managed by a custom simproc which searches for a good
place to apply them.

\ctikzfig{derived-rules/GreenPiCommute}

\ctikzfig{derived-rules/RedPiCommute}

\ctikzfig{derived-rules/GreenCommute}

\ctikzfig{derived-rules/RedCommute}

These rules are used by the Pauli Commutation proc, which is based on
REDUCE\_METRIC.

\ctikzfig{derived-rules/GreenCxCommute}

\ctikzfig{derived-rules/GreenPiCx}

\ctikzfig{derived-rules/RedPiCx}

\ctikzfig{derived-rules/RedCxCommute}

\subsubsection{C2 Rules}
\label{sec:c2-rules}

These two alternate in the C2Proc simproc.  Both have a custom
targeting routine to select where they should be applied.

\ctikzfig{derived-rules/C2RedCxCommute}

\ctikzfig{derived-rules/C2GreenCxCommute}

\end{document}